\newtheorem{definition}{Definition}
\newtheorem{example}{Example}
\newtheorem{proposition}{Proposition}
\newtheorem{fact}{Fact}
\newtheorem{lemma}{Lemma}
\newtheorem{theorem}{Theorem}
\renewcommand{\L}{\mathcal{L}}
\newcommand{\N}{N}
\newcommand{\I}{\Atoms}
\renewcommand{\S}{\mathcal{A}}
\newcommand{\Atoms}{{\bf P}}
\newcommand{\C}{{\bf C}}
\renewcommand{\O}{{\bf O}}
\newcommand{\AB}{\mathcal{O}^\ast}
\newcommand{\maj}{\mathsf{maj}}
\newcommand{\pv}{\mathsf{pv}}
\newcommand{\tuple}[1]{\left\langle #1 \right\rangle}
\newcommand{\set}[1]{\left\{ #1 \right\}}
\newcommand{\0}{{\bf 0}}
\newcommand{\1}{{\bf 1}}
\renewcommand{\phi}{\varphi}
\renewcommand{\phi}{\varphi}
\newcommand{\limp}{\rightarrow}
\newcommand{\AND}{~\mathit{and}~}
\newcommand{\IFF}{\Longleftrightarrow}
\title{Binary Voting with Delegable Proxy: \\ An Analysis of Liquid Democracy
\thanks{The paper outlines work presented at: the {\em Dynamics in Logic IV} workshop, TU Delft, November 2016; the seminars of the Computer Science Departments of the University of Leicester and the University of Oxford, December 2016; the {\em Dutch Social Choice Colloquium}, December 2016. The authors wish to thank the participants of the above workshops and seminars for many helpful suggestions. This paper supersedes the earlier technical report \cite{christoff16liquid}. The authors wish also to thank Umberto Grandi for many insightful comments on an earlier version of this paper.
Both authors acknowledge support for this research by EPSRC under grant EP/M015815/1. 
Zo\'e Christoff also acknowledges support from the Deutsche Forschungsgemeinschaft (DFG) and
Grantov\'{a} agentura \v{C}esk\'{e} republiky (GA\v{C}R) joint project 
RO 4548/6--1.
} 
}
\author{Zo\'e Christoff
\institute{Department of Philosophy \\ University of Bayreuth, Germany}
\email{zoe.christoff@uni-bayreuth.de}
\and
Davide Grossi
\institute{Department of Computer Science \\ University of Liverpool, UK}
\email{d.grossi@liverpool.ac.uk}
}
\begin{document}

\maketitle


\begin{abstract}
The paper provides an analysis of the voting method known as delegable proxy voting, or liquid democracy. The analysis first positions liquid democracy within the theory of binary aggregation. It then focuses on two issues of the system: the occurrence of delegation cycles; and the effect of delegations on individual rationality when voting on logically interdependent propositions.
It finally points to proposals on how the system may be modified in order to address the above issues.
\end{abstract}


\section{Introduction}

Liquid  democracy \cite{liquid_feedback} is a form of group decision-making considered to lie between direct and representative democracy. It has been used and popularized by campaigns for democratic reforms (e.g., Make Your Laws\footnote{\url{www.makeyourlaws.org}} in the US) and parties (e.g., Demoex\footnote{\url{demoex.se/en/}} in Sweden, and Piratenpartei\footnote{\url{www.piratenpartei.de}} in Germany), which used it to coordinate the behavior of party representatives in local as well as national assemblies. At its heart is voting via a delegable proxy, also called transferable or transitive proxy. For each issue submitted to vote, each agent can either cast its own vote, or it can delegate its vote to another agent---a proxy---and that agent can delegate in turn to yet another agent, and so on. This differentiates liquid democracy from standard proxy voting \cite{Miller_1969,Tullock_1992}, where proxies cannot delegate their vote further. Finally, the agents that decided not to delegate their votes cast their ballots (e.g., under majority rule), but their votes now carry a weight consisting of the number of all agents that, directly or indirectly, entrusted them with their vote.

\paragraph{Context}
Voting by delegable proxy was most probably first outlined in \cite{dodgson84principles}.
Analyses of standard (non-delegable) proxy voting from a social choice-theoretic perspective---specifically through the theory of spatial voting---have been put forth in \cite{Alger_2006} and \cite{Green_Armytage_2014}. 
To date, little work has focused directly on liquid democracy:
\cite{kling15voting} provided an empirical study of voting behavior in liquid democracy based on election data from the Liquid Feedback\footnote{\url{www.liquidfeedback.org}} platform of the German Piratenpartei;
and \cite{skowron16proportional} studied how, in the Liquid Feedback platform, issues to be submitted to vote are selected among user-generated proposals via proportional rankings.\footnote{Another, somewhat tangential work is \cite{Boldi_2011}, which focused on algorithmic aspects of a variant of liquid democracy, called {\em viscous democracy}, with applications to recommender systems.}
However, to the best of our knowledge, no work has so far studied voting by delegable proxy as an aggregation rule in its own sake. We do this in the present paper, studying liquid democracy from the perspective of binary aggregation \cite{Dokow_2010,grandi13lifting,Grossi_2014,endriss16judgment}.



\paragraph{Outline}
The paper starts in Section \ref{sec:preliminaries} by introducing some preliminaries on the theory of binary aggregation. This preliminary section presents also novel results on binary aggregation with abstentions, which are needed for the analysis developed later in the paper. Section \ref{sec:proxy} introduces a simple model of liquid democracy based on binary aggregation. Section \ref{sec:extensions} establishes formal relations between the proposed model of liquid democracy and standard binary aggregation with abstentions. It studies the issue of circular delegations, and the issue of individual (ir)rationality when voting takes place on logically interdependent issues. The section finally moves from the analysis provided
to outline two variants of delegable proxy, which: are more resilient against delegation cycles (Section \ref{sec:default}); better preserve individual rationality when voting on logically interdependent issues (Section \ref{sec:diffusion}). Section \ref{sec:conclusions} concludes.


\section{Binary Aggregation} \label{sec:preliminaries}

The formalism of choice for the analysis presented in this paper is binary aggregation with abstentions (see, for instance, \cite{Dokow_2010}).
This section is devoted to its introduction.

\subsection{Opinions and Opinion Profiles}

A binary aggregation structure (\emph{BA structure}) is a tuple $\S = \tuple{\N,\Atoms,\gamma}$ where:
\begin{itemize}
\item $\N = \set{1,\dots,n}$ is a non-empty finite set individuals ($|\N|= n$);
\item $\Atoms = \set{p_1,\dots,p_m}$ is a non-empty finite set of issues or propositions ($|\Atoms|= m$);
\item $\gamma \in \L$ is an (integrity) constraint, where $\L$ is the propositional language constructed by closing $\Atoms$ under a functionally complete set of Boolean connectives (e.g., $\set{\neg, \wedge}$).
\end{itemize}
A binary {\em opinion} is an assignment of acceptance/rejection values (or, truth values) to the set of issues $\Atoms$.  Allowing abstention amounts to considering incomplete opinions: an {\em incomplete opinion} is a partial function from $\Atoms$ to $\set{\0,\1}$. We will study it as a function $O: \Atoms \rightarrow \set{\0,\1, \ast}$ thereby explicitly denoting the undetermined value ``$\ast$" corresponding to abstention. Thus, $O(p)=\0$ (respectively, \mbox{$O(p)=\1$}) indicates that opinion $O$ rejects (respectively, accepts) the issue $p$. Syntactically, the two opinions correspond to the truth of the literals $p$ or $\neg p$. For $p \in \Atoms$ we write $\pm p$ to denote one element from $\set{p, \neg p}$, and $\pm \Atoms$ to denote $\bigcup_{p\in\Atoms} \set{p, \neg p}$, which we will refer to as the {\em agenda} of $\S$.

We say that the incomplete opinion of an agent $i$ is \emph{consistent} if the set of formulas $\set{p \mid O_i(p) = \1} \cup \set{\neg p \mid O_i(p) = \0} \cup \set{\gamma}$ can be extended to a model of $\gamma$ (in other words, if the set is satisfiable). Intuitively, the consistency of an incomplete opinion means that the integrity constraint is consistent with $i$'s opinion on the issues she does not abstain about. We also say that an incomplete opinion is {\em closed} whenever the following is the case: {\em if} the set of propositional formulas $\set{p \mid O_i(p) = \1} \cup \set{\neg p \mid O_i(p) = \0} \cup \set{\gamma}$ logically implies $p$ (respectively, $\neg p$), {\em then} $O_i(p) = \1$ (respectively, $O_i(p) = \0$). That is, individual opinions are closed under logical consequence or, in other words, agents cannot abstain on issues whose acceptance or rejection is dictated by their expressed opinions on other issues. The set of incomplete opinions is denoted $\AB$ and the set of consistent and closed incomplete opinions $\AB_c$. 
We will often refer to the latter simply as individual opinions, as they are the ones we focus on.

An \emph{opinion profile} $\O = (O_1,\dots,O_{n})$ records the opinion on the elements of $\Atoms$, of every individual in $\N$. Given a profile $\O$ the $i^{\mathit{th}}$ projection $\O$ is denoted $O_i$ (i.e., the opinion of agent $i$ in profile $\O$). 
We also denote by $\O(p)= \set{i \in \N \mid O_{i}(p)= \1}$ the set of agents accepting issue $p$ in profile $\O$, by $\O(\neg p)= \set{i \in \N \mid O_{i}(p)= \0}$ the set of agents rejecting $p$ in $\O$, and by $\O(\pm p) = \O(p) \cup \O(\neg p)$ the set of non-abstaining agents in $\O$. Sometimes we restrict the previous definitions to a coalition $C \subseteq \N$, so that $\O_C(p)$ (resp., $\O_C(\neg p)$) denotes the set of agents in $C$ that accept (resp., reject) $p$. Finally, we write $\O =_{-i} \O'$ to denote that the two profiles $\O$ and $\O'$ are identical except, possibly, for the opinion of voter $i$.

\subsection{Aggregators}

An aggregator is a function $F: (\AB_{c})^\N \to \AB$, from profiles of closed and consistent incomplete opinions to incomplete opinions. The \emph{issue-by-issue strict majority rule} ($\maj$) accepts an issue if and only if the majority of the non-abstaining voters accept that issue: 
\begin{align}\label{eq:majast}
\maj(\O)(p)=  
\begin{cases}
        \1  & \mathit{ if }  |\O(p)|  > |\O(\neg p)|\\
        \0    & \mathit{ if }  |\O(\neg p)| > |\O(p)| \\
       \ast   & \mathit{ otherwise }   \\
\end{cases}
\end{align}
We will refer to this rule simply as `majority'. Majority can be thought of as a quota rule. Quota rules in binary aggregation with abstentions are of the following form: accept when the proportion of \emph{non-abstaining} individuals who accept is above the acceptance-quota; reject when the proportion of \emph{non-abstaining} individuals who reject is above the rejection-quota; and abstain otherwise:\footnote{There are several ways to think of quota rules with abstentions. Instead of a quota being a proportion of non-abstaining agents, one could for instance define rules with absolute quotas instead: accept when at least $n$ agents accept, independently of how many agents do not abstain. In practice, voting rules with abstention are often a combination of those two ideas: accept an issue if a big enough proportion of the population does not abstain, and if a big enough proportion of those accept it.}

\begin{definition}[Quota rules] \label{def:quota}
Let $\S$ be a BA structure.  
A {\em quota rule} (for $\S$) is defined as follows, for any issue $p\in\Atoms$, and any opinion profile $\O\in (\AB_{c})^\N$:
\begin{align}\label{eq:quotarules}
F(\O)(p)=  
\begin{cases}
        \1  & \mbox{ if }  |\O(p)|  \geq \left\lceil q_\1(p) \cdot |\O(\pm p)| \right\rceil\\
        \0  & \mbox{ if }  |\O(\neg p)| \geq \left\lceil q_\0(p) \cdot |\O(\pm p)| \right\rceil  \\
       \ast & \mbox{ otherwise } \\
\end{cases}
\end{align} \label{eq:quota}
where $\lceil \cdot \rceil$ is the cealing function. And, for $x \in \set{\0, \1}$, $q_x$ is a function $q_x: \Atoms \to (0,1] \cap \mathbb{Q}$ assigning a positive rational number smaller or equal to $1$ to each issue, and such that, for each $p \in \Atoms$:
\begin{align}
q_x(p) > 1 - q_{(\1 - x)}(p). \label{eq:constraint}
\end{align}
A quota rule is called: {\em uniform} if, for all $p_i,p_j \in \Atoms$, $q_x(p_i) = q_x(p_j)$;
it is called {\em symmetric} if, for all $p \in \Atoms$, $q_\1(p) = q_\0(p)$.
\end{definition}
Notice that the definition excludes trivial quota.\footnote{Those are quotas with value $0$ (always met) or $>1$ (never met). Restricting to non-trivial quota is not essential but simplifies our exposition.} 
It should also be clear that, by the constraint in \eqref{eq:constraint}, Definition \ref{def:quota} defines an aggregator of type $(\AB_{c})^\N \to \AB$ as desired.\footnote{What needs to be avoided here is that both the acceptance and rejection quota are set so low as to make the rule output both the acceptance and the rejection of the issue.} Notice finally that if the rule is symmetric, then \eqref{eq:constraint} forces $q_x(p) > \frac{1}{2}$, for any given $p$.

\begin{example} \label{example:maj}
The majority rule \eqref{eq:majast} is a uniform and symmetric quota rule where $q_\1$ and $q_\0$ are set to meet the equation $\lceil q_\1(p) \cdot |\O(\pm p)| \rceil = \lceil q_\0(p) \cdot |\O(\pm p)| \rceil = \left\lceil \frac{|\O(\pm p)| + 1}{2}\right\rceil$, for any issue $p$ and profile $\O$. This is achieved by setting the quota as $\frac{1}{2}<q_\1(p),q_\0(p) \leq \frac{1}{2}+\frac{1}{|N|} = \frac{|N|+1}{2|N|}$, for each issue $p$. More precisely one should therefore consider $\maj$ as a class of quota rules yielding the same collective opinions.
\end{example}

\begin{example}
The uniform and symmetric unanimity rule is defined by setting $q_\1 = q_\0 = 1$. A uniform but asymmetric variant of unanimity can be obtained by setting $q_\1 = 1$ and $q_\0 = \frac{1}{|\N|}$.
\end{example}



\subsection{Properties of Agendas and Aggregators} \label{subsec:axiomatic}


\begin{definition}[simple/evenly negatable agenda]
An agenda $\pm \I$ is said to be {\em simple} if there exists no set $X \subseteq \pm \Atoms$  such that: $|X|\geq 3$, and  $X$  is minimally $\gamma$-inconsistent, that is:
\begin{itemize}
\item $X$ is inconsistent with $\gamma$
\item For all  $Y\subset X$, $Y$ is consistent with $\gamma$ (or, $\gamma$-consistent).
\end{itemize}
An agenda is said to be {\em evenly negatable} if there exists a minimal $\gamma$-inconsistent set $X \subseteq \pm \Atoms$ such that for a set $Y \subseteq X$ of even size, $X\backslash Y \cup \set{\neg p \mid p \in Y}$ is $\gamma$-consistent. It is said to be {\em path-connected} if there exists $p_1, \ldots, p_n \in \pm \I$ such that $p_1 \models^c p_2, \ldots, p_{n-1} \models^x p_n$ where $p_i \models^c p_{i+1}$ (conditional entailment) denotes that there exists $X \subseteq \pm\I$, which is $\gamma$-consistent with both $p_i$ and $\neg p_{i+1}$, and such that $\set{p} \cup X \cup \set{\gamma}$ logically implies $p_{i +1}$. 
\end{definition}
We refer the reader to \cite[Ch. 2]{Grossi_2014} for a detailed exposition of the above rather technical conditions. We provide just a simple illustrative example here.

\begin{example}
Let $\I = \set{p, q, r}$ and let $\gamma = (p \wedge q ) \rightarrow r$. $\pm \I$ is not simple. The set $\set{p, q, \lnot r} \subseteq \pm\Atoms$ is inconsistent with $\gamma$, but none of its subsets is. Let now $\I = \set{p, q, r}$ and let $\gamma = (r \limp q) \land (q \limp p)$. In this case, where issues are ordered by logical entailment, each minimally $\gamma$-inconsistent set is of size $2$, and the agenda is therefore simple. The trivial example of simple agenda is where $\gamma = \top$, and the issues are therefore logically independent. 
\end{example}

We proceed by recalling some well-known properties of aggregators from the judgment and binary aggregation literatures, adapted to the setting of aggregation with abstention:\footnote{Such adaptation is, in many cases, non-trivial.}

\begin{definition} \label{def:properties}
Let $\S$ be an aggregation structure. An aggregator $F: (\AB_{c})^\N \to \AB$ is said to be:
\begin{description}
\item[unanimous] iff for all $p\in \Atoms$, for all profiles $\O$ and all $x\in\{0,1,\ast\}$: if for all $i\in \N, O_i(p) = x$, then $F(\O)(p)= x $. I.e., if everybody agrees on a value, that value is the collective value.
\item[anonymous] iff for any bijection $\mu: \N\rightarrow\N$, $F(\O)=F(\O^\mu)$, where $\O^\mu = \tuple{O_{\mu(1)}, \ldots, O_{\mu(n)}}$. I.e., permuting opinions among individuals does not affect the output of the aggregator.
\item[$p$-oligarchic] iff there exists $C \subseteq \N$ (the {\em $p$-oligarchs}) s.t. $C\neq\emptyset$ and for any profile $\O$, and any value $x \in \set{\0,\1}$, $F(\O)(p) = x$ iff $O_i(p) = x$ for all $i\in C$.
I.e., there exists a group of agents whose definite opinions always determine the group's definite opinion on $p$. If $F$ is $p$-oligarchic, with the same oligarchs on all issues $p \in \Atoms$, then it is called {\bf oligarchic}.
\item[monotonic] iff, for all $p\in \Atoms$ and all $i\in\N$: 
for any profiles $\O, \O'$, if $\O =_{-i} \O'$: (i) if $O_i(p)\neq \1$ and $O'_i(p)\in\{\1,\ast\}$, then: if $F(\O)(p)=\1$, then $F(\O')(p)=\1$; and (ii) if $O_i(p)\neq \0$ and $O'_i(p)\in\{\0,\ast\}$, then: if $F(\O)(p)=\0$, then $F(\O')(p)=\0$. I.e., increasing support for a definite collective opinion does not change that collective opinion.
\item[independent] iff, for all $p\in \Atoms$, for any profiles $\O, \O'$: if for all $i\in \N, O_i(p) = O'_i(p)$, then $F(\O)(p)=F(\O')(p)$. I.e., the collective opinion on each issue is determined only by the individual opinions on that issue.
\item[neutral] iff, for all $p,q \in \Atoms$, for any profile $\O$: if for all $i\in\N$, $O_i(p)=O_i(q)$, then $F(\O)(p)=F(\O)(q)$. I.e., all issues are aggregated in the same manner.
\item[responsive] iff for all $p\in\Atoms$, there exist profiles $\O, \O'$  such that $F(\O)(p)=\1$ and $F(\O')(p)=\0$. I.e., the rule allows for an issue to be accepted for some profile, and rejected for some other. 
\item[unbiased] iff for all $p \in \Atoms$, for any profiles $\O, \O'$ : if for all $i\in\N$, $O_i(p)= \1$ iff $O'_i(p)=\0$ (we say that $\O'$ is the ``reversed'' profile of $\O$), then $F(\O)(p)=\1$ iff $F(\O')(p)=\0$. I.e., reversing all and only the individual opinions on $p$ (from acceptance to rejection and from rejection to acceptance) results in reversing the collective opinion on $p$. 
\item[rational] iff for any profile $\O$, $F(\O)$ is consistent and closed. I.e., the aggregator preserves the constraints on individual opinions.
\end{description}
\end{definition}
Majority is unanimous, anonymous, monotonic, independent, neutral, responsive and unbiased, but it is not rational in general, as witnessed by well-known judgment aggregation paradoxes (cf. \cite{Grossi_2014}).

Finally, let us also define the following property. The {\bf undecisiveness} of an aggregator $F$ on issue $p$ for a given aggregation structure is defined as the number of profiles which result in collective abstention on $p$, that is:
\begin{align}
u(F)(p) & = |\set{\O \in \AB_c \mid F(\O)(p) = \ast}|.
\end{align}


\subsection{Some Results}\label{subsec:char.quota}

Aggregation by majority is collectively rational under specific assumptions on the aggregation constraint:

\begin{proposition} \label{fact:rat}
Let $\S$ be a BA structure with a simple agenda. Then $\maj$ is rational.
\end{proposition}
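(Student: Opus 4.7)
The plan is a proof by contradiction, splitting into the two ways in which rationality can fail: inconsistency of $\maj(\O)$, and failure of $\maj(\O)$ to be closed under $\gamma$-entailment. Both cases exploit simplicity of the agenda to reduce to minimally $\gamma$-inconsistent sets of size at most $2$, together with the closure property of individual opinions in $\AB_c$.

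For the inconsistency case, suppose the opinion set of $\maj(\O)$ is $\gamma$-inconsistent and extract a minimal witness $X \subseteq \pm\I$. By simplicity, $|X| \leq 2$. If $|X| = 1$, say $X = \{\ell\}$, then $\gamma \models \neg \ell$, so by closure every individual in $\AB_c$ must have $\bar\ell$ in their opinion, hence $|\O(\ell)| = 0$ and $\maj(\O)$ cannot have $\ell$ in its opinion set, a contradiction. If $X = \{\ell_1, \ell_2\}$, minimality gives $\{\ell_1, \gamma\} \models \bar\ell_2$ and $\{\ell_2, \gamma\} \models \bar\ell_1$, and closure of individual opinions then forces the inclusions $\O(\ell_1) \subseteq \O(\bar\ell_2)$ and $\O(\ell_2) \subseteq \O(\bar\ell_1)$. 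Combining these with the two strict majority inequalities $|\O(\ell_i)| > |\O(\bar\ell_i)|$ yields the cyclic chain
\[
|\O(\ell_1)| > |\O(\bar\ell_1)| \geq |\O(\ell_2)| > |\O(\bar\ell_2)| \geq |\O(\ell_1)|,
\]
a contradiction.

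For the closure case, suppose the opinion set $\Gamma$ of $\maj(\O)$ together with $\gamma$ entails some $\ell \in \pm\I$ that is not in $\Gamma$. Consider a minimal subset of $\Gamma \cup \{\bar\ell\}$ inconsistent with $\gamma$; it must contain $\bar\ell$ (else $\Gamma$ itself is already $\gamma$-inconsistent, handled above), and by simplicity has size at most $2$. The size-$1$ subcase gives $\gamma \models \ell$, which forces $\maj(\O)$ to contain $\ell$ by the same unanimity-via-closure argument as before. In the size-$2$ subcase $\{m, \bar\ell\}$, closure of individual opinions yields $\O(m) \subseteq \O(\ell)$ and $\O(\bar\ell) \subseteq \O(\bar m)$, and the majority acceptance of $m$ gives $|\O(\ell)| \geq |\O(m)| > |\O(\bar m)| \geq |\O(\bar\ell)|$, i.e., $\maj(\O)$ does contain $\ell$, contradicting the assumption.

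The main subtlety I expect is that individual opinions are required to be both consistent and \emph{closed}: closure is what validates the inclusions $\O(\ell_1) \subseteq \O(\bar\ell_2)$ and $\O(m) \subseteq \O(\ell)$, by preventing an agent from abstaining on an issue whose value is already fixed by their other opinions together with $\gamma$. Without closure the chain of inequalities would break, since an agent accepting $\ell_1$ could then abstain on $\ell_2$ rather than reject it, and the numerical contradiction would no longer follow. All other steps are standard minimal-inconsistent-set manipulations, lifted to the abstention setting.
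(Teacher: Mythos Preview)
Your argument is correct and follows the same core idea as the paper's proof: both derive a contradiction by extracting a minimally $\gamma$-inconsistent subset, invoking simplicity to bound its size, and then using closure of individual opinions to obtain the inclusions $\O(\ell_1) \subseteq \O(\bar\ell_2)$ etc., which feed the chain of strict inequalities. The size-$2$ consistency case in your write-up is essentially identical to the paper's.

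Where you differ is in coverage. The paper's proof treats only the inconsistency half of rationality, and within that only the size-$2$ subcase (it simply asserts that all minimally inconsistent sets have cardinality~$2$, silently folding the size-$1$ case into this). It does not explicitly verify closure of $\maj(\O)$ at all. Your proof is more complete: you separate out the size-$1$ subcase (where $\gamma$ alone forces the value of an issue, and unanimity via individual closure does the work), and you give an independent argument for closure of the collective opinion, reducing it to the same minimal-inconsistent-set machinery. Both additions are routine but necessary for the statement as phrased, since ``rational'' in the paper means consistent \emph{and} closed. Your remark about why closure (not merely consistency) of individual opinions is essential---to prevent an agent accepting $\ell_1$ from abstaining on $\ell_2$---is exactly the right diagnosis and is also what the paper relies on, though less explicitly.
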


May's theorem \cite{May_1952} famously shows that for preference aggregation, the majority rule is in fact the \emph{only} aggregator satisfying a specific set of desirable properties. A corresponding characterization of the majority rule is given in standard judgment aggregation (without abstentions): when the agenda is simple, the majority rule is the only aggregator which is rational, anonymous, monotonic and unbiased \cite[Th. 3.2]{Grossi_2014}. We give below a novel characterization theorem, which takes into account the possibility of abstentions both at the individual and at the collective level. To the best of our knowledge this is the first result of this kind in the literature on judgment and binary aggregation with abstention.  

\smallskip

We first prove the following lemma:
\begin{lemma} \label{lemma:min}
Let $F$ be a uniform and symmetric quota rule for a given $\S$. The following holds:
$\frac{1}{2}< q_\1 = q_0 \leq \frac{|N|+1}{2|N|}$ if and only if $F = \arg\min_G u(G)(p)$, for all $p \in \Atoms$. 
\end{lemma}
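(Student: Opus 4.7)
My plan is to reduce the minimization of $u(F)(p)$ to a pointwise condition on the integer-valued threshold function $k \mapsto \lceil qk \rceil$, where $k = |\O(\pm p)|$ is the number of non-abstainers. For a uniform symmetric quota rule with $q_\1 = q_\0 = q$, the abstention clause in \eqref{eq:quotarules} is equivalent to $|\O(p)| < \lceil qk \rceil$ and $|\O(\neg p)| < \lceil qk \rceil$, so the set of profiles on which $F$ abstains at $p$---and hence $u(F)(p)$---depends on $q$ only through this threshold function. Consequently, two uniform symmetric quota rules with the same thresholds on every $k \in \set{1,\dots,|\N|}$ yield identical undecisiveness, and raising any threshold can only enlarge the undecided set.

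Next I would establish the baseline. Constraint \eqref{eq:constraint} together with symmetry force $q > \frac{1}{2}$, hence $\lceil qk \rceil \geq \lceil (k+1)/2 \rceil =: t^\ast(k)$ for every $k \geq 1$. This lower bound is exactly the strict majority threshold attained by $\maj$ (cf.\ Example~\ref{example:maj}). Thus $u(F)(p) \geq u(\maj)(p)$ for every symmetric quota rule, with equality iff $\lceil qk \rceil = t^\ast(k)$ for all $k \in \set{1,\dots,|\N|}$. The lemma therefore reduces to the purely arithmetic statement: determine the values of $q$ for which the ceiling $\lceil qk \rceil$ coincides with the strict majority threshold throughout the whole range $k \leq |\N|$.

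Both directions then follow by direct estimates on the ceiling. For $(\Rightarrow)$, if $q \leq \frac{|\N|+1}{2|\N|}$ then $qk \leq \frac{k}{2} + \frac{k}{2|\N|} \leq \frac{k+1}{2} \leq t^\ast(k)$ for every $k \leq |\N|$; combined with the lower bound this forces $\lceil qk \rceil = t^\ast(k)$, so $F$ and $\maj$ agree on every profile and $u(F) = u(\maj)$ is minimal. For $(\Leftarrow)$, if $q > \frac{|\N|+1}{2|\N|}$ then taking $k = |\N|$ gives $q|\N| > \frac{|\N|+1}{2}$, pushing $\lceil q|\N| \rceil$ strictly above $t^\ast(|\N|)$; one then exhibits a profile in which all agents vote and acceptance count equals $t^\ast(|\N|)$, witnessing a configuration decided by $\maj$ but undecided under $F$, which contradicts minimality. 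The main obstacle is the bookkeeping around the ceiling at $k = |\N|$ in the $(\Leftarrow)$ direction, plus the easy verification that the witnessing profile belongs to $\AB_c$, which is automatic whenever the agenda contains at least one issue on which both acceptance and rejection are $\gamma$-consistent.
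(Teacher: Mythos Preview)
Your approach is essentially the paper's: both reduce the minimization of undecisiveness to the pointwise threshold identity $\lceil qk \rceil = \left\lceil \frac{k+1}{2} \right\rceil$ for every $k \in \set{1,\dots,|\N|}$, and then translate this back into a condition on $q$. Your $(\Rightarrow)$ direction is clean and correct.

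The $(\Leftarrow)$ direction has a genuine gap, however. You argue that $q > \frac{|\N|+1}{2|\N|}$ gives $q|\N| > \frac{|\N|+1}{2}$ and hence $\lceil q|\N| \rceil > t^\ast(|\N|)$. This inference only works when $|\N|$ is odd, so that $\frac{|\N|+1}{2}$ is an integer. When $|\N|$ is even, $\frac{|\N|+1}{2}$ is a half-integer, $t^\ast(|\N|) = \frac{|\N|}{2}+1$, and $q|\N|$ may sit strictly between the two, yielding $\lceil q|\N| \rceil = t^\ast(|\N|)$. Concretely, take $|\N| = 4$ and $q = \frac{2}{3}$: then $q > \frac{5}{8}$, yet $\lceil qk \rceil = t^\ast(k)$ for every $k \in \set{1,2,3,4}$, so this rule \emph{is} $\maj$ and your witnessing profile is decided by $F$, not undecided. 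This is not just a flaw in your bookkeeping: it shows that the $(\Leftarrow)$ direction of the lemma, read literally as a constraint on the quota value, fails for even $|\N|$. The paper's own proof has the same lacuna (the asserted equivalence of its conditions (c) and (d) breaks in the direction (c)$\Rightarrow$(d)); the intended statement is that every minimizer coincides with $\maj$ as a function, i.e., admits \emph{some} representing quota in the interval, not that every representing quota lies there---a point already flagged in Example~\ref{example:maj}.
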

That is, the quota rule(s) corresponding to the majority rule (Example \ref{example:maj}) is precisely the rule that minimizes undecisiveness.

We can now state and prove the characterization result:
\begin{theorem}\label{thm:quotarules}
Let $F: (\AB_{c})^\N \to \AB$ be an aggregator for a given $\S$. The following holds: 
\begin{enumerate}
\item $F$ is a quota rule if and only if it is anonymous, independent, monotonic, and responsive; 
\item $F$ is a uniform quota rule if and only if it is a neutral quota rule;
\item $F$ is a symmetric quota rule if and only if it is an unbiased quota rule; 
\item $F$ is the majority rule $\maj$ if and only if it is a uniform symmetric quota rule which minimizes undecisiveness. 
\end{enumerate}
\end{theorem}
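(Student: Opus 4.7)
The plan is to prove the four parts of Theorem~\ref{thm:quotarules} in sequence, doing the construction work in Part 1 and then obtaining Parts 2--4 as specializations (with Part 4 additionally invoking Lemma~\ref{lemma:min}). For every part, the forward direction (``quota rule with extra structure $\Rightarrow$ axioms'') is routine verification against Definition~\ref{def:quota}: anonymity and independence follow because \eqref{eq:quotarules} depends only on the counts $|\O(p)|$ and $|\O(\pm p)|$; monotonicity follows because each permitted voter change (from $\0$ to $\ast$ or $\1$, and from $\ast$ to $\1$) weakly shifts the acceptance count in the right direction (and dually for rejection); responsiveness follows from the non-triviality of the quotas in $(0,1]\cap\mathbb{Q}$; neutrality follows from uniformity and unbiasedness from symmetry by direct inspection of \eqref{eq:quotarules}.

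For the converse direction of Part 1, given an $F$ that is anonymous, independent, monotonic, and responsive, I first use independence and anonymity to reduce $F(\O)(p)$ to a function $g_p(k,j)$ of the two counts $k = |\O(p)|$ and $j = |\O(\neg p)|$ alone. Monotonicity then makes the acceptance region $A_p = \{(k,j) : g_p(k,j) = \1\}$ upward-closed in $k$ and downward-closed in $j$ (and dually for the rejection region), and responsiveness makes both non-empty. Writing $a_p(\ell) = \min\{k : (k,\ell - k) \in A_p\}$ for each admissible number of non-abstainers $\ell$, I extract $q_\1(p)$ as a rational lying in $\bigcap_\ell \bigl((a_p(\ell) - 1)/\ell,\, a_p(\ell)/\ell\bigr]$, so that $\lceil q_\1(p)\cdot \ell\rceil = a_p(\ell)$ for every $\ell$; an analogous construction produces $q_\0(p)$. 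The constraint \eqref{eq:constraint}, $q_\1(p) + q_\0(p) > 1$, then follows from the requirement that $F$ never simultaneously accepts and rejects $p$: if it failed, some $(k, \ell - k)$ would satisfy both threshold inequalities.

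Parts 2 and 3 follow by inspection of the extracted quotas. Uniformity ($q_x$ independent of $p$) holds iff the functions $g_p$ coincide across issues, which is exactly neutrality. Symmetry ($q_\1(p) = q_\0(p)$) holds iff reversing all opinions on $p$ swaps the acceptance and rejection conditions in \eqref{eq:quotarules}, which is exactly unbiasedness. Part 4 then combines Parts 2--3 with Lemma~\ref{lemma:min}: a uniform symmetric quota rule is determined by a single rational $q = q_\1 = q_\0$, and Lemma~\ref{lemma:min} pinpoints the range $q \in (1/2,\, (|\N| + 1)/(2|\N|)]$ as precisely the rules minimizing undecisiveness, which by Example~\ref{example:maj} is exactly the class $\maj$.

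The main obstacle is the quota extraction in Part 1: one must show that a \emph{single} rational $q_\1(p)$ governs the threshold $a_p(\ell)$ for every number of non-abstainers $\ell$ simultaneously. This amounts to showing that the intervals $\bigl((a_p(\ell) - 1)/\ell,\, a_p(\ell)/\ell\bigr]$ share a common rational point, which in turn rests crucially on the discrete slope constraint $a_p(\ell + 1) \leq a_p(\ell) + 1$ forced by the ``abstention-to-acceptance'' clause of monotonicity. Once this step is carried out cleanly, the remaining items in the theorem reduce to bookkeeping.
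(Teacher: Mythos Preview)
Your overall architecture matches the paper's: reduce via anonymity and independence to a function of the counts $(|\O(p)|, |\O(\neg p)|)$, use responsiveness and monotonicity to locate thresholds, and settle Parts~2--4 by inspection together with Lemma~\ref{lemma:min}. The paper's own argument for the converse of Part~1 is in fact terser than yours: it takes $q_\1(p)$ to be the minimal ratio $|\O(p)|/|\O(\pm p)|$ realized among accepting profiles and then invokes monotonicity to extend acceptance to every profile with at least that ratio.

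Where you go further is in correctly isolating the real difficulty---that a \emph{single} rational $q_\1(p)$ must produce the correct threshold $a_p(\ell)$ for \emph{every} number $\ell$ of non-abstainers simultaneously, i.e., that $\bigcap_\ell\bigl((a_p(\ell)-1)/\ell,\,a_p(\ell)/\ell\bigr]\neq\emptyset$. However, your proposed resolution does not close this gap. The two constraints extractable from monotonicity, namely $a_p(\ell)\le a_p(\ell+1)$ (rejector $\to$ abstainer) and $a_p(\ell+1)\le a_p(\ell)+1$ (abstainer $\to$ acceptor), are \emph{not} sufficient to force the intervals to intersect. For $|\N|=5$, the threshold sequence $a(1),\dots,a(5)=1,1,1,2,3$ satisfies both slope bounds, yet the interval for $\ell=3$ is $(0,\tfrac{1}{3}]$ while that for $\ell=5$ is $(\tfrac{2}{5},\tfrac{3}{5}]$, and these are disjoint. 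The aggregator ``accept $p$ iff $|\O(p)|\ge a(|\O(\pm p)|)$; reject $p$ iff $|\O(\neg p)|=|\O(\pm p)|$; else abstain'' is then anonymous, independent, monotonic, and responsive, but its acceptance side is not of the form \eqref{eq:quotarules} for any $q_\1(p)\in(0,1]$. So the step you flag as the main obstacle is genuinely the obstacle, and neither the slope bound you cite nor the paper's minimal-ratio argument (which never confronts how the threshold varies with $|\O(\pm p)|$) actually surmounts it. Your Parts~2--4 match the paper and are fine.
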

By the above theorem and Proposition~\ref{fact:rat}, it follows that, on simple agendas, majority is the only rational aggregator which is also responsive, anonymous, systematic and monotonic.

\smallskip 


We conclude by recollecting a well-known impossibility result concerning binary aggregation with abstentions:
\begin{theorem}[\cite{Dokow_2010,Dietrich_2007}]\label{thm:imp.agg.abst}
Let $\S$ be a BA structure whose agenda is path connected and evenly negatable. Then if an aggregator $F: (\AB_{c})^\N \to \AB$ is independent, unanimous and collectively rational, then it is oligarchic.
\end{theorem}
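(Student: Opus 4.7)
The plan is to adapt the classical ``decisive-coalitions'' argument from judgment aggregation to the setting with abstentions, following the lines of Dokow and Holzman's original proof.

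\emph{Step 1: decisive families.} By independence, $F(\O)(p)$ depends only on the pair $(\O(p), \O(\neg p))$. For each issue $p$ and value $x \in \set{\0,\1}$, I would define the family $\mathcal{W}_p^x$ of coalitions $C \subseteq \N$ that are ``decisive'' for value $x$ on $p$, in the sense that whenever the agents in $C$ express value $x$ on $p$ (and the remaining agents abstain, or are otherwise consistent with our construction), $F(\O)(p) = x$. Unanimity immediately gives $\N \in \mathcal{W}_p^x$, and a profile witnessing that $\emptyset$ is not decisive will come out of the other conditions.

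\emph{Step 2: cross-issue uniformity via path-connectedness.} Next I would show that $\mathcal{W}_p^x$ does not depend on $p$. Take $p \models^c q$ with witness set $X \subseteq \pm\I$ which is $\gamma$-consistent with both $p$ and $\neg q$ and such that $X \cup \set{p, \gamma}$ logically implies $q$. Given $C \in \mathcal{W}_p^\1$, build a profile in which every agent endorses the literals of $X$ while exactly $C$ additionally accepts $p$. Decisiveness of $C$ forces $F(\O)(p) = \1$; unanimity forces $F$ to endorse all of $X$; then collective rationality forces $F(\O)(q) = \1$, showing $C \in \mathcal{W}_q^\1$. Iterating along a path-connected chain, and using even negatability to run the chain in both directions, yields one common family $\mathcal{W}$.

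\emph{Step 3: closure under intersection via even negatability.} This is the technical heart. Given $C_1, C_2 \in \mathcal{W}$, I would pick a minimal $\gamma$-inconsistent $X \subseteq \pm\I$ together with a subset $Y \subseteq X$ of even size such that $(X \setminus Y) \cup \set{\neg p \mid p \in Y}$ is $\gamma$-consistent. I partition $\N$ into the four regions $C_1 \cap C_2$, $C_1 \setminus C_2$, $C_2 \setminus C_1$ and $\N \setminus (C_1 \cup C_2)$, and assign to each region a closed, consistent incomplete opinion that endorses an appropriate slice of $\pm X$ while abstaining on the rest; the even-sized reversal $Y$ is what makes each slice $\gamma$-consistent and therefore extendable to a member of $\AB_c$. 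Then decisiveness of $C_1$ and $C_2$ forces the collective to endorse every literal of $X$, contradicting $\gamma$ unless $C_1 \cap C_2$ is itself decisive.

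\emph{Step 4: from filter to oligarchy.} A direct argument using unanimity, independence, and the Step 3 construction yields monotonicity of $\mathcal{W}$ (supersets of decisive sets are decisive). Combined with intersection-closure and $\N \in \mathcal{W}$, this makes $\mathcal{W}$ a proper filter on the finite set $\N$, hence principal: $\mathcal{W} = \set{C \subseteq \N \mid M \subseteq C}$ for some non-empty $M$. Reading off the definition of decisiveness then gives $F(\O)(p) = x$ iff every $i \in M$ has $O_i(p) = x$, which is exactly oligarchy with oligarchs $M$.

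\emph{Main obstacle.} The delicate core is Step 3: constructing, agent by agent, incomplete opinions that are at once consistent \emph{and} closed under logical consequence, as required by the domain $\AB_c$. Closedness prevents an agent from simply abstaining on issues whose value is forced by positions she already expresses, so the slices of $\pm X$ handed out to each region of $\N$ must be chosen so that their logical closures coincide with the literals explicitly assigned. Even negatability is precisely the combinatorial lever that makes this possible, but aligning the partition of $\N$ with a partition of $\pm X$ that respects closedness, while still pinning down the collective verdict on every literal of $X$, is where the bookkeeping becomes genuinely tricky.
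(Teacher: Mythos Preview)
The paper does not contain a proof of this theorem at all: it is stated as a known result and attributed to \cite{Dokow_2010,Dietrich_2007} (``We conclude by recollecting a well-known impossibility result concerning binary aggregation with abstentions''), and no argument for it appears either in the main text or in the appendix. So there is no ``paper's own proof'' to compare against; you have produced a proof sketch where the authors simply import the result.

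That said, your outline follows the standard Dokow--Holzman/Dietrich route and is broadly on target. One point to tighten: in Step~1 you describe decisiveness informally (``the agents in $C$ express value $x$ \ldots\ and the remaining agents abstain, or are otherwise consistent with our construction''), but in the abstention setting the exact definition matters, since the complement of $C$ can partly accept, partly reject, and partly abstain. The literature works with a notion of decisiveness that quantifies over \emph{all} admissible completions of the complement, and showing that unanimity plus independence plus collective rationality actually yields this stronger form is itself a non-trivial sub-step you have folded into Step~4's monotonicity remark. Your own ``main obstacle'' paragraph correctly identifies the real difficulty (building closed and consistent incomplete opinions for each region of the partition in Step~3); just be aware that the same closedness constraint also bites in Step~2 when you ask the agents outside $C$ to endorse $X$ while not being forced to take a stand on $p$ or $q$.
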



\section{Binary Liquid Democracy
 } \label{sec:proxy}

In binary aggregation with delegable proxy, agents either express an acceptance/rejection opinion or \emph{delegate} the expression of such an opinion to another agent. The section models and studies this type of voting as a form of binary aggregation function. 

\subsection{Proxy Opinions, Profiles and Delegation Graphs}

Let a BA structure $\S$ be given and assume for now that $\gamma = \top$, that is, all issues are logically independent. An opinion $O: \Atoms \to \set{\0,\1} \cup \N$ is an assignment of either a truth value or another agent to each issue in $\Atoms$, such that $O_i(p) \neq i$ (that is, self-delegation is not an expressible opinion).
We will later also require proxy opinions to be individually rational, in a precise sense (Section \ref{sec:indirat}). 
For simplicity we are assuming that abstention is not a feasible opinion in proxy voting, but such assumption can be easily lifted in what follows.

We call functions of the above kind {\em proxy opinions} to distinguish them from standard (binary) opinions, and we denote by $\mathcal{P}$ the set of all proxy opinions, $\mathcal{P}_c$ the set of all individually rational proxy opinions (as defined later in Section \ref{sec:indirat}). Finally, $\mathcal{P}^\N$ denotes the set of all profiles of proxy opinions, which we call, {\em proxy profiles}.  



\medskip

Each proxy profile $\O$ induces a {\em delegation graph} $G^\O = \langle \N, \set{R_p}_{p \in \I}\rangle$ where for $i, j \in \N$:
\begin{align}
iR_p j & \IFF
\left\{
\begin{array}{ll}
O_i(p) = j & \mbox{if $i \neq j \in \N$} \\
O_i(p) \in \set{\0, \1} & \mbox{otherwise}
\end{array}
\right.
\end{align}
The expression $iR_pj$ stands for ``$i$ delegates her vote to $j$ on issue $p$''. 
Each $R_p$ is a so-called functional relation. It corresponds to the graph of an endomap on $\N$. So we will sometimes refer to the endomap $r_p: \N \to \N$ of which $R_p$ is the graph. Relations $R_p$ have a very specific structure and can be thought of as a set of (converging) trees whose roots all belong to cycles (possibly loops).

The weight of an agent $i$ w.r.t. $p$ in a delegation graph $G^\O$ is given by its indegree with respect to $R^*_p$ (i.e., the reflexive and transitive closure of $R_p$):\footnote{
We recall that the reflexive transitive closure $R^*$ of a binary relation $R \subseteq \N^2$ is the smallest reflexive and transitive relation that contains $R$.
} 
 $w^\O_i(p) = \left|\set{j \in \N \mid j R^*_p i} \right|$. The weight of a coalition $C \subseteq \N$ is defined naturally as $w^\O_C(p) = \sum_{i \in C} w^\O_i(p)$.
 This definition of weight makes sure that each individual carries the same weight, independently of the structure of the delegation graph. Alternative definitions of weight are of course possible.
 
For all $p \in \I$, we consider the function $g_p: \N \rightarrow \wp(\N)$ defined as
$
g_p(i) = \set{j \in \N \mid i R^*_p j \AND jR_p j}.
$
The function associates to each agent $i$ (for a given issue $p$), the (singleton consisting of the) last agent reachable from $i$ via a path of delegation on issue $p$, when it exists (and $\emptyset$ otherwise). Slightly abusing notation we will use $g_p(i)$ to denote an agent, that is, the {\em guru} of $i$ over $p$ when $g_p(i) \neq \emptyset$. If $g_p(i) = \set{i}$ we call $i$ a {\em guru} for $p$. Notice that $g_p(i) = \set{i}$ iff $r_p(i) = i$, i.e., $i$ is a guru of $p$ iff it is a fixpoint of the endomap $r_p$.
 
If the delegation graph $G^\O$ of a proxy profile $\O$ is such that, for some $R_p$, there exists no $i \in N$ such that $i$ is a guru of $p$, we say that graph $G^\O$ (and profile $\O$) is {\em void} on $p$. Intuitively, a void profile is a profile where no voter expresses an opinion, because every voter delegates her vote to somebody else.

Given a BA structure $\S$, a proxy aggregation rule (or proxy aggregator) for $\S$ is a function $\pv:\mathcal{P}^\N\to\AB$ that maps every proxy profile to one collective incomplete opinion. As above, $\pv(\O)(p)$ denotes the outcome of the aggregation on issue $p$.

\subsection{Proxy Aggregators}

The most natural form of voting via delegable proxy is a proxy version of the majority rule we discussed in Section \ref{sec:preliminaries}:\footnote{On the importance of majority decisions in the current implementation of liquid democracy by Liquid Feedback cf. \cite[p.106]{liquid_feedback}.}
\begin{align}\label{eq:proxymajast}
\pv_{\maj}(\O)(p) =
\begin{cases}
 \1 &  \mbox{ if }  \sum_{i \in \O(p)} w^\O_i(p) > \sum_{i \in \O(\neg p)}w^\O_i(p) \\
 \0 &  \mbox{ if }  \sum_{i \in \O(\neg p)} w^\O_i(p) > \sum_{i \in \O(p)}w^\O_i(p) \\
 \ast & \mbox{ otherwise }
\end{cases}
\end{align}
Again, the notation $\O(p)$ (resp., $\O(\neg p)$) denotes the set of voters accepting (resp., rejecting) $p$ in proxy profile $\O$.
Intuitively, an issue is accepted by proxy majority in profile $\O$ if the sum of the weights of the agents who accept $p$ in $\O$ exceeds the majority quota, it is rejected if the sum of the weights of the agents who reject $p$ in $\O$ exceeds the majority quota, and it is undecided otherwise. Note that $\sum_{i \in \O(p)} w^\O_i(p) = |\{i \in \N | O_{g_i}(p)= \1 \}|$ (and similarly for $\neg p$), that is, the sum of the weights of the gurus accepting (rejecting) $p$ is precisely the cardinality of the set of agents whose gurus accept (reject) $p$. 

It should be clear that for any quota rule $F: \AB_c \to \AB$ a proxy variant $\pv_F$ of $F$ can be defined via an obvious adaptation of \eqref{eq:proxymajast}.


\section{Analysis and Extensions} \label{sec:extensions}

In this section we provide an analysis of liquid democracy by highlighting two issues---the failure of rationality in ballots under delegable proxy voting, and the occurrence of delegation cycles---and by embedding it in the theory of binary aggregation with abstentions presented in Section \ref{sec:preliminaries}. We also advance proposals for simple modifications of the delegable proxy voting method in order to address the issues we identify.

\subsection{Individual and Collective Rationality} \label{sec:indirat}

In our discussion so far we have glossed over the issue of logically interdependent issues and collective rationality. The reason is that under the delegative interpretation of liquid democracy developed in the previous sections individual rationality itself appears to be a more debatable requirement than it normally is in classical aggregation. 

A proxy opinion $O_i$ is {\em individually rational} if the set of formulas
\begin{align}
\set{\gamma} \cup \set{p \in \I \mid O_{g_p(i)}(p) = \1} \cup \set{\neg p \in \I \mid O_{g_p(i)}(p) = \0} \label{eq:ir}
\end{align}
is satisfiable (consistency), and if whenever \eqref{eq:ir} entails $\pm p$, then $\pm p$ belongs to it (closure).
That is, the integrity constraint $\gamma$ is consistent with $i$'s opinion on the issues she does not delegate on, and the opinions of her gurus (if they exist), and those opinions, taken together, are closed under logical consequence. 

The consistency and closure of \eqref{eq:ir} capture a highly idealized way of how delegation works: voters are assumed to be able to check or monitor how their gurus are going to vote, and always modify their delegations if an inconsistency arises. So the constraint appears highly unrealistic under a delegative interpretation of liquid democracy. Aggregation via delegable proxy has at least the potential to represent individual opinions as irrational (inconsistent and/or not logically closed). 

The assumption of individual rationality for proxy opinions, however, is needed in order to establish variants of known binary aggregation results for the case of liquid democracy, to which we turn now.

\subsection{Embedding}

Having defined individual rationality in the previous section, it is possible now to study embeddings from proxy voting to standard aggregation, and vice versa. 

Aggregation in liquid democracy---as conceived in \cite{liquid_feedback}---should satisfy the principle that the opinion of every voter, whether expressed directly or through proxy, should be given the same weight.\footnote{
``[\ldots] in fact every eligible voter has still exactly one vote [\ldots] unrestricted transitive delegations are an integral part of Liquid Democracy. [\ldots] Unrestricted transitive delegations are treating delegating voters and direct voters equally, which is most democratic and empowers those who could not organize themselves otherwise'' \cite[p.34-36]{liquid_feedback}
}
In other words, this principle suggests that aggregation via delegable proxy should actually be `blind' for the specific type of delegation graph arising. Making this more formal, we can think of the above principle as suggesting that the only relevant content of a proxy profile is its translation into a standard opinion profile (with abstentions) via a function $t: \mathcal{P} \to \AB$ defined as follows: for any $i \in \N$ and $p \in \I$, $t(O_i(p)) = O_{g_p(i)}$ if $ g_p(i) \neq \emptyset$ (i.e., if $i$ has a guru for $p$), and $t(O_i(p)) = \ast$ otherwise. Clearly, if we assume proxy profiles to be individually rational, the translation will map proxy opinions into individually rational (consistent and closed) incomplete opinions. By extension, we will denote by $t(\O)$ the incomplete opinion profile resulting from translating the individual opinions of a proxy profile $\O$.

\medskip

The above discussion suggests the definition of the following property of proxy aggregators: a proxy aggregator $\pv$ has the {\bf one man--one vote property} (or is a one man--one vote aggregator) if and only if $pv = t \circ F$ for some aggregator $F: \AB_c \to \AB$ (assuming the individual rationality of proxy profiles).\footnote{Not every proxy aggregator satisfies the one man--one vote property. By means of example, consider an aggregator that uses the following notion of weight accrued by gurus in a delegation graph. The weight $w(i)$ of $i$ is $\sum_{j \in R^*(i)} \frac{1}{\ell(i,j)}$ where $\ell(i,j)$ denotes the length of the delegation path linking $j$ to $i$. This definition of weight is such that the contribution of voters decreases as their distance from the guru increases. Aggregators of this type are studied in \cite{Boldi_2011}. \label{footnote:weight}} The class of one man--one vote aggregators can therefore be studied simply as the concatenation $t \circ F$ where $F$ is an aggregator for binary voting with abstentions,
as depicted in Figure \ref{figure:manvote} (left).

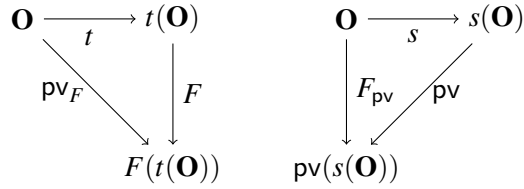
\begin{figure}
\begin{center}
\begin{tikzpicture}
\node(P) at (0,0) {$\O$};
\node(O) at (2,0) {$t(\O)$};
\node(F) at (2, -2) {$F(t(\O))$};
\draw[->] (P) -- node[below]{$t$}     (O); 
\draw[->] (P) -- node[left]{$\pv_F$}  (F);
\draw[->] (O) -- node[right]{$F$}  (F);
\end{tikzpicture}
\hspace{0.5cm}
\begin{tikzpicture}
\node(O) at (2,0) {$\O$};
\node(Q) at (4,0) {$s(\O)$};
\node(F) at (2, -2) {$\pv(s(\O))$};
\draw[->] (Q) -- node[right]{$\pv$}  (F);
\draw[->] (O) -- node[right]{$F_\pv$}  (F);
\draw[->] (O) -- node[below]{$s$}  (Q);
\end{tikzpicture}
\end{center}
\caption{Embeddings to and from binary aggregation.}
\label{figure:manvote}
\end{figure}

\begin{example}
Proxy majority $\pv_\maj$ \eqref{eq:proxymajast} is a one man--one vote rule aggregator. It is easy to check that, for any proxy profile $\O$: $\pv_{\maj}(\O) = \maj(t(\O))$. 
\end{example}


It follows that for every proxy aggregator $\pv_F = t \circ F$ the axiomatic machinery developed for standard aggregators can be directly tapped into. 
Characterization results then extend effortlessly. In particular, Theorem~\ref{thm:quotarules} implies the following: 
\begin{fact}[Characterization of proxy majority]\label{thm:proxy.quotarules}
A one man--one vote proxy aggregator $\pv = t \circ F$ for a given $\S$ is proxy majority $\pv_\maj$ iff $F$ is anonymous, independent, monotonic, responsive, neutral and minimizes undecisiveness.
\end{fact}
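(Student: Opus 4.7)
The plan is to derive the fact as a direct consequence of Theorem~\ref{thm:quotarules} and Lemma~\ref{lemma:min}, working through the one man--one vote decomposition $\pv(\O) = F(t(\O))$. Two observations enable the reduction. First, the example just above Fact~1 has already shown $\pv_\maj(\O) = \maj(t(\O))$, so $\pv_\maj$ is the one man--one vote rule whose underlying $F$ is $\maj$. Second, every incomplete opinion profile $\O' \in (\AB_c)^\N$ lies in the image of $t$: let each non-abstainer in $\O'$ vote directly, and pair each abstainer with another abstainer in a two-agent mutual delegation cycle, which has no guru and therefore translates to $\ast$ under $t$. Consequently $F$ is uniquely recoverable from $\pv$, and the axiomatic properties of $F$ on its whole domain can be read off from the behaviour of $\pv$.

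For the $(\Leftarrow)$ direction, assume $F$ is anonymous, independent, monotonic, responsive, neutral and minimizes undecisiveness. Theorem~\ref{thm:quotarules}(1) makes $F$ a quota rule; item~(2) then promotes it to a uniform quota rule. Lemma~\ref{lemma:min} identifies the unique undecisiveness-minimizing aggregator as the uniform symmetric quota rule with $\frac{1}{2} < q_\1 = q_\0 \leq \frac{|N|+1}{2|N|}$, and Example~\ref{example:maj} together with Theorem~\ref{thm:quotarules}(4) names this rule $\maj$. Hence $F = \maj$ and therefore $\pv = \maj \circ t = \pv_\maj$.

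For the $(\Rightarrow)$ direction, assume $\pv = \pv_\maj$. Surjectivity of $t$ forces $F = \maj$, and the six listed properties of $\maj$ then follow from Theorem~\ref{thm:quotarules}(1), (2), (4) combined with Lemma~\ref{lemma:min}: anonymity, independence, monotonicity and responsiveness from $\maj$ being a quota rule; neutrality from uniformity; minimization of undecisiveness from Lemma~\ref{lemma:min}.

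The only delicate point---and the likely main obstacle---is that Lemma~\ref{lemma:min} is stated for uniform symmetric quota rules, while the $(\Leftarrow)$ direction invokes minimization of undecisiveness for an aggregator that is only known to be a uniform quota rule. This gap is closed by reading the lemma's clause "$F = \arg\min_G u(G)(p)$" as picking out a unique minimizer across \emph{all} aggregators, so that any undecisiveness minimizer must coincide with the specific symmetric majority quota rule. Once that reading is granted, the fact is a straightforward concatenation of the characterizations already established.
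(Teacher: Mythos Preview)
The paper offers no proof beyond the remark that the Fact follows from Theorem~\ref{thm:quotarules}; your proposal fills in the argument along the intended lines, so the approaches coincide. Two points nonetheless deserve comment.

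First, your surjectivity claim for $t$ is not quite right: pairing abstainers in two-cycles fails when the number of abstainers on an issue is odd, and more fundamentally no proxy profile can yield exactly one abstainer on an issue, since a lone delegator must point into either a guru-reaching chain or a cycle involving at least one other agent. The paper's own translation $s$ arranges all abstainers in a single cycle and explicitly notes that a dummy voter is needed in the single-abstainer case. Hence $F$ is not literally recoverable from $\pv$ on its whole domain, and the $(\Rightarrow)$ direction should be read as a statement about the canonical decomposition with $F=\maj$ rather than as forced by surjectivity.

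Second, you are right to flag that the Fact's list omits ``unbiased'', which is what Theorem~\ref{thm:quotarules}(3) uses to obtain symmetry before (4) applies. Your resolution---reading the $\arg\min_G$ in Lemma~\ref{lemma:min} as selecting a unique minimiser so that undecisiveness-minimisation already pins down $\maj$---is the reading the paper must intend. Note, however, that $\maj$ does not minimise undecisiveness among \emph{all} aggregators (a constant rule has undecisiveness $0$), so the minimisation has to be understood relative to the class already carved out by the other axioms; once that is granted, your argument goes through.
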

The fact may well be considered as a theoretical argument in favor of the use of proxy majority in aggregation with delegable proxy as currently done, for instance, in the Liquid Feedback platform.

\smallskip

Similarly, we can study an embedding of standard aggregation into voting with delegable proxy. For example, we can define a function $s: \mathcal{O}^\ast_c \to \mathcal{P}_c$ from opinion profiles to individually rational proxy profiles as follows. For a given opinion profile $\O$, and issue $p$ consider the set $\set{i\in \N \mid O_i(p) = \ast}$ of individuals that abstain in $\O$ and take an enumeration $\sigma: \set{i\in \N \mid O_i(p) = \ast} \to \set{1, \ldots, m}$ of its elements, with $m = |\set{i\in \N \mid O_i(p) = \ast}|$. The function is defined as follows: for any $i \in \N$ and $p \in \I$, $s(O_i(p)) = O_i(p)$ if $O_i(p) \in \set{\0,\1}$, $s(O_i(p)) = (\sigma(i)+1) \bmod m$, otherwise.\footnote{Notice that since self-delegation (that is, $O_i(p) = i$) is not feasible in proxy opinions, this definition of $s$ works for profiles where, on each issue, either nobody abstains or at least two individuals abstain. Clearly, a dummy abstaining voter can then be added in profiles where only one individual abstains.} 
A translation of this type allows to think of standard aggregators $F: \AB_c \to \AB$ as the concatenation $s \circ \pv$, for some proxy aggregator $\pv$, as in Figure \ref{figure:manvote} (right). The following impossibility result for aggregation with delegable proxy voting can then be obtained as a direct consequence of Theorem \ref{thm:imp.agg.abst}:

\begin{fact} \label{thm:imp.proxy}
Let $\S$ be such that its agenda is path connected and evenly negatable. For any proxy aggregator $\pv$, if $s \circ \pv$ is independent, unanimous and collectively rational, then it is oligarchic.
\end{fact}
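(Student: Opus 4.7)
The plan is to observe that the composition $s \circ \pv$ (in the sense of Figure~\ref{figure:manvote} right: apply $s$ first, then $\pv$) is itself an aggregator of type $(\AB_c)^\N \to \AB$, exactly the kind of aggregator to which Theorem~\ref{thm:imp.agg.abst} applies. Indeed, $s$ sends individually rational incomplete-opinion profiles to proxy profiles in $\mathcal{P}_c$, while $\pv$ sends any proxy profile to a collective incomplete opinion in $\AB$, so the composite is a well-defined aggregator over incomplete-opinion profiles.

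Once this is in place the result falls out immediately. The hypotheses of Theorem~\ref{thm:imp.agg.abst} are discharged directly: path-connectedness and even negatability of the agenda come from the assumption on $\S$, while independence, unanimity, and collective rationality are assumed for $s \circ \pv$. Applying the theorem yields that $s \circ \pv$ is oligarchic, which is exactly the statement of the fact.

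The one point requiring verification is the well-typedness of the composition, namely that $s$ really lands in $\mathcal{P}_c$ so that the individual rationality condition \eqref{eq:ir} is satisfied in the translated profile. This is immediate from the construction of $s$: voters expressing a definite opinion in $\O$ are left unchanged, and so inherit rationality from $\O$; abstainers in $\O$ are arranged into a delegation cycle that contains no fixpoint of the underlying endomap, so none of them has a guru. Hence for each such abstainer the set in \eqref{eq:ir} collapses to $\set{\gamma}$ alone, whose consistency and closure are inherited from the background assumption that $\gamma$ is a satisfiable propositional formula. There is no substantive obstacle beyond this routine check: the content of the proof is precisely that the embedding $s$ reduces the delegable-proxy setting to the standard binary-aggregation-with-abstentions setting, where Theorem~\ref{thm:imp.agg.abst} already does all the real work.
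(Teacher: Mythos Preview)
Your proposal is correct and matches the paper's approach: the paper does not give a separate proof but simply states that the fact is ``a direct consequence of Theorem~\ref{thm:imp.agg.abst}'', which is exactly what you unfold. Your extra verification that $s$ lands in $\mathcal{P}_c$ is a welcome detail the paper leaves implicit; note only that a voter may express a definite opinion on some issues and abstain on others, so the clean dichotomy you describe is really per issue, but the conclusion is unaffected since in either case the set in \eqref{eq:ir} coincides with the rationality set of the original incomplete opinion $O_i \in \AB_c$.
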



\subsubsection{Cycles and Abstentions} \label{sec:proxyabs}

Proxy aggregators rely on the existence of gurus in the underlying delegation graphs. If the delegation graph $R_p$ on issue $p$ contains no guru, then the aggregator has access to no information in terms of who accepts and who rejects issue $p$. To avoid bias in favor of acceptance or rejection, such situations should therefore result in an undecided collective opinion. That is for instance the case of $\pv_\maj$. However, such situations may well be considered problematic, and the natural question arises therefore of how likely they are, at least in principle.
\begin{proposition} \label{fact:cycles}
Let $\S$ be a BA structure where $\gamma = \top$ (i.e., issues are independent) and fix an issue $p$.
If each proxy profile is equally probable (impartial culture assumption), then the probability that a given proxy profile $\O$ is such that $t(\O)$ is a profile in which every voter abstains tends to $\frac{1}{e^2}$ as $n$ tends to infinity.
\end{proposition}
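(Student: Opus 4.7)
The plan is to reduce the event ``$t(\O)$ is fully abstaining on $p$'' to a purely combinatorial event on the functional graph $R_p$, and then compute its probability directly under the uniform measure.

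First I would unpack what it means for every voter to abstain in $t(\O)$ on the fixed issue $p$. By the definition of $t$, voter $i$ abstains iff $g_p(i) = \emptyset$, i.e., iff the delegation chain starting at $i$ never reaches a self-loop of $r_p$. Since $R_p$ is functional on the finite set $\N$, from every $i$ one eventually enters a cycle of $r_p$, and $g_p(i)$ is nonempty iff that cycle is a loop. Hence $t(\O)$ assigns $\ast$ to every voter on $p$ iff $r_p$ has no fixpoint, which (recalling that self-delegations are not allowed as opinions) holds iff no voter $i$ expresses $O_i(p) \in \set{\0,\1}$, i.e., iff every voter delegates on $p$.

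Next I would use the fact that $\gamma = \top$ to decouple issues entirely, so that the probability of the event on issue $p$ is determined by the marginal distribution of $\tuple{O_1(p), \ldots, O_n(p)}$. Under impartial culture each $O_i(p)$ ranges uniformly over the set $\set{\0,\1}\cup (\N \setminus \set{i})$, which has exactly $n+1$ elements, of which $n-1$ correspond to delegations. The marginals are independent across $i$ (since the joint distribution on proxy profiles is uniform, hence product), so the probability that every voter delegates on $p$ equals
\begin{equation}
\left(\frac{n-1}{n+1}\right)^n.
\end{equation}
Finally, rewriting this as $\bigl(1 - \tfrac{2}{n+1}\bigr)^{n+1} \cdot \bigl(1 - \tfrac{2}{n+1}\bigr)^{-1}$ and taking the limit as $n \to \infty$ yields $e^{-2}\cdot 1 = \tfrac{1}{e^2}$, as required.

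I do not expect a serious obstacle: the only place a reader might demand care is the claim that the absence of self-loops in $r_p$ is equivalent to the absence of any reachable fixpoint from any node, which needs the elementary observation that every trajectory in a finite functional graph terminates in a cycle, and only length-$1$ cycles contain fixpoints. Everything else is an elementary counting argument followed by the standard $(1-\tfrac{c}{n})^n \to e^{-c}$ limit.
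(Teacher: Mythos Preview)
Your proposal is correct and follows essentially the same approach as the paper: reduce the event to ``no voter is a guru on $p$'', i.e., $r_p$ has no fixpoint, count that each $O_i(p)$ has $n+1$ possible values of which $n-1$ are delegations, obtain $\left(\frac{n-1}{n+1}\right)^n$, and take the limit to $e^{-2}$. Your reduction step is argued a bit more carefully than the paper's, and your limit manipulation via $\bigl(1-\tfrac{2}{n+1}\bigr)^{n+1}\cdot\bigl(1-\tfrac{2}{n+1}\bigr)^{-1}$ is a cleaner variant of the paper's chain of equalities, but the substance is identical.
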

It follows that for unanimous and one man--one vote proxy aggregators, asymptotically, there is a considerable chance that a profile results in collective abstention. Now contrast this with the probability that all agents abstain on an issue when each voter either expresses a $\1$ or $\0$ opinion or abstains (that is, the binary aggregation with abstentions setting studied earlier). In that case the probability that everybody abstains, and therefore the profile is void, clearly tends to $0$ as $n$ tends to infinity.

Proposition~\ref{fact:cycles} should obviously not be taken as a realistic estimate of the effect of cycles on collective abstention, moreover concrete implementations of delegable proxy voting may be designed to detect and resolve cycles (cf. \cite{yamakawa07toward,kling15voting}). Ultimately, theoretical (e.g., game theoretic) models of delegation behavior in voters or, ideally, 
election data should be used to assess whether delegation cycles ever lead large parts of the electorate to effectively lose representation in the aggregation mechanism. Still, the link we highlight between delegable proxy and collective abstention is, to the best of our knowledge, novel and has escaped so far recognition within the liquid democracy literature.\footnote{Delegation cycles are normally criticized for the wrong reason, that is, the fact that hey may be interpreted as to lead to an infinite accrual of voting power: ``The by far most discussed issue is the so-called circular delegation
problem. What happens if the transitive delegations lead to
a cycle, e.g. Alice delegates to Bob, Bob delegates to Chris, and
Chris delegates to Alice? Would this lead to an infinite voting
weight? Do we need to take special measures to prohibit such a
situation? In fact, this is a nonexistent problem: A cycle only exists as long as there is no activity in the cycle in which case the cycle has no effect. As already explained [\ldots], as soon as somebody casts a vote, their (outgoing) delegation will be suspended. Therefore, the cycle naturally disappears before it is used. In our example: If Alice and Chris decide to vote, then Alice will no longer delegate to Bob, and Chris will no longer delegate to Alice [\ldots]. If only Alice decides to vote, then only Alice's delegation to Bob is suspended and Alice would use a voting weight of 3. In either case the cycle is automatically resolved and the total voting weight used is 3.'' \cite[Section 2.4.1]{liquid_feedback} Cf. \cite{Behrens15}. We agree that the alleged accrual of infinite voting power is immaterial. However, the fact that the occurrence of a cycle leads to the loss of representation of the voters in the cycle---and of those delegating to them---does not seem to have yet been acknowledged.
}


\subsection{Delegable Proxy with Default Values} \label{sec:default}

Motivated by the above analysis, we outline a simple modification of voting via delegable proxy, which requires agents to always submit a substantive opinion on the issues, and at the same time indicate a trustee. In this view, an opinion (called {\em proxy opinion with default}) is therefore a function $O_i: \Atoms \to (\set{\0,\1} \times \N)$ assigning to every issue an acceptance or rejection value and, at the same time, an individual, which is to be considered the individual the vote is delegated to. Intuitively, each voter expresses an opinion but accepts that opinion to be overruled by the opinion of the individual she entrusts. Note that such individual may well be the voter herself (e.g., $O_i(p) = (\1, i)$). We refer to profiles of such opinions as {\em proxy profiles with default}.  

Let $\C_\O(p) = \set{C \subseteq \N \mid C ~\mbox{is a $R_p$-cycle}~\AND |\O_C(p)| > |\O_C(\neg p)|}$ denote the set of cycles of the delegation graph $R_p$ such that among the agents in the cycle there exists a majority accepting $p$. The set $\C_\O(\neg p)$ is defined in the symmetric way. Now define proxy majority as an aggregator for profiles of proxy opinions with default values:
\begin{align}\label{eq:default}
\hspace{-0.2cm} \pv'_{\maj}(\O)(p) =
\begin{cases}
 \1 &  \mbox{if }  \sum_{C \in \C_\O(p)} w^\O_C(p) > \sum_{C \in \C_\O(\neg p)}w^\O_C(p) \\
 \0 &  \mbox{if }  \sum_{C \in \C_\O(\neg p)} w^\O_C(p) > \sum_{C \in \C_\O(p)}w^\O_C(p) \\
 \ast & \mbox{otherwise }
\end{cases}
\end{align}
where, recall, $w^\O_C(p)$ is the cumulative weight (w.r.t. $R_p$) of the agents in $C$.
The intuition behind \eqref{eq:default} is to use each cycle, and not only loops (i.e., gurus), as sources of information for the proxy aggregator, by attributing to the individuals in a cycle the majority default opinion present in that cycle. 

As one might intuitively expect, this is enough to break the link between delegation cycles and group abstention we identified with Proposition \ref{fact:cycles}. To state the following result we need to adapt the translation function $t$ for proxy profiles, to a translation function $t'$ translating proxy profiles with default to opinion profiles with abstentions: for any $i \in \N$ and $p \in \I$, $t'(O_i(p)) = \maj(\O_C)(p)$ where $C$ is the cycle reachable from $i$ via $R_p$.

\begin{proposition} \label{fact:cycles_default}
Let $\S$ be a BA structure where $\gamma = \top$ (i.e., issues are independent) and fix an issue $p$.
If each proxy profile with default is equally probable (impartial culture assumption), then the probability that a given proxy profile with default $\O$ is such that $t'(\O)$ is a profile in which every voter abstains tends to $0$ as $n$ tends to infinity.
\end{proposition}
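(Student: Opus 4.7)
The plan is to reduce the ``all-abstain'' event to a statement about uniform random functional graphs with Bernoulli default labels, and then to show this event is unlikely because even a single cycle of length growing with $n$ is very unlikely to be tied.

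First I would unpack the impartial culture assumption. Restricted to issue $p$, each $O_i(p)$ is uniform and independent on $\set{\0,\1}\times\N$, which factors cleanly into a uniform random delegation endomap $r_p\colon\N\to\N$ (self-loops allowed, since a voter may pick herself) together with i.i.d.\ Bernoulli$(1/2)$ defaults $(d_i)_{i\in\N}$ that are independent of $r_p$. Every orbit of a functional graph terminates in a unique cycle, so $t'(\O)(i)=\maj(\O_{C(i)})(p)$ where $C(i)$ is the cycle reachable from $i$, and the event that $t'(\O)$ is the constant-$\ast$ profile is exactly the event $T$ that every cycle of $r_p$ has equally many accept- and reject-defaults.

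Next I would single out the cycle $C_1$ reached by iterating $r_p$ from vertex $1$ and note the trivial inclusion $T\subseteq\set{C_1 \text{ is tied}}$. Conditional on the graph, the defaults on $C_1$ are i.i.d.\ Bernoulli$(1/2)$; if $|C_1|=k$, the tie probability equals $\binom{k}{k/2}2^{-k}$ when $k$ is even and $0$ when $k$ is odd, and by Stirling both are at most $c/\sqrt{k}$ for an absolute constant $c$. Writing $L_1=|C_1|$, this yields
\[
P(T)\ \le\ E\!\left[c/\sqrt{L_1}\right].
\]

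The heart of the proof is then to show $L_1\to\infty$ in probability as $n\to\infty$. This is a classical property of uniform random mappings: the rescaled cycle length $L_1/\sqrt{n}$ converges in distribution to a random variable with no atom at $0$, so $P(L_1\le K)\to 0$ for every fixed $K$. A self-contained derivation can use the explicit formula $E[N_k]=k^{-1}\prod_{j=0}^{k-1}(1-j/n)$ for the expected number of $k$-cycles together with a Poisson-type second-moment argument for short-cycle counts, from which $\sum_{k>K}E[N_k]\to\infty$ entails $P(L_1\le K)\to 0$. Once this is in hand, bounded convergence gives $E[c/\sqrt{L_1}]\to 0$, hence $P(T)\to 0$, which is the claim. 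The main obstacle is precisely this last step: the combinatorial bookkeeping leading up to it is routine, but ruling out short-cycle functional graphs with high probability requires either appealing to the distributional theory of random mappings or a dedicated moment computation for cycle counts.
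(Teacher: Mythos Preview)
Your argument is correct and takes a genuinely different route from the paper. The paper's proof proceeds by explicit enumeration: it writes down a formula for the number of delegation graphs on $n$ vertices with $k$ cyclic vertices (via the recursive function $f(n,m)$ from \cite{purdom68cycle}), restricts to even $k$, multiplies by the number of hung-majority default assignments on those cyclic vertices, and then asserts that the resulting count \eqref{eq:final} divided by $2^n\cdot n^n$ tends to $0$. By contrast, you avoid any global enumeration by focusing on the single cycle $C_1$ reached from vertex $1$, noting that $T\subseteq\{C_1\text{ tied}\}$, and then invoking the distributional theory of random mappings to get $|C_1|\to\infty$ in probability (hence $P(\text{tie on }C_1)\to 0$ via the $c/\sqrt{k}$ bound). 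Your approach is more probabilistic and arguably more transparent about \emph{why} the probability vanishes---it isolates the mechanism (cycle lengths in random endomaps are of order $\sqrt{n}$)---whereas the paper's enumeration, while in principle yielding an explicit rate, leaves the final limit unjustified in the text.

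One minor caveat: the ``self-contained derivation'' you sketch via $E[N_k]=k^{-1}\prod_{j=0}^{k-1}(1-j/n)$ and $\sum_{k>K}E[N_k]\to\infty$ does not by itself give $P(L_1\le K)\to 0$, since the divergence of the expected number of long cycles says nothing about the particular cycle containing vertex $1$. The clean argument is the one you state first: either cite the limit law for $L_1/\sqrt{n}$ directly, or derive it via the birthday-problem analysis of the rho-length from vertex $1$ (the tail plus cycle length is $\Theta(\sqrt{n})$ w.h.p., and conditional on rho-length $\rho$ the cycle length is uniform on $\{1,\dots,\rho\}$). With that in place, bounded convergence finishes the proof as you say.
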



\subsection{Individually Rational Delegable Proxy} \label{sec:diffusion}

Delegable proxy voting can also be studied from a different perspective. 
Imagine a group where, for each issue $p$, each agent copies the binary---$\0$ or $\1$---opinion of a unique trustee.\footnote{For simplicity, in this section we assume agents are therefore not allowed to abstain, although this is not a crucial assumption for the development of our analysis.} Imagine that this group does so repeatedly until all agents (possibly) reach a stable opinion. These new stable opinions can then be aggregated as the `true' opinions of the individuals in the group, for instance, via majority. The collective opinion of a group of agents, who either express a binary opinion or delegate it to another agent, is (for one man--one vote proxy aggregators) the same as the output obtained from a vote where each individual has to express a binary opinion but gets there by {\em copying} the opinion of her trustee (possibly the agent itself). 
In this perspective, aggregation via delegable proxy can be assimilated to a (stabilizing) process of opinion formation on delegation graphs. 

The above interpretation of liquid democracy is explicitly put forth in \cite{liquid_feedback}.\footnote{
``While one way to describe delegations is the transfer of voting weight to another person, you can alternatively think of delegations as automated copying of the ballot of a trustee.
While at assemblies with voting by a show of hands it is naturally possible to copy the vote of other people, in Liquid Democracy this becomes an intended principle'' \cite[p. 22]{liquid_feedback}.
} Under this `vote-copying' interpretation, the constraint on individual rationality---consistency and closure of \eqref{eq:ir}---is, arguably, more easily defendable: each agent will copy opinions coming from her trustees only if consistency and closure are preserved. 

\subsubsection{Boolean DeGroot Processes}

We briefly develop the above intuition, outlining an opinion diffusion model of delegable proxy which preserves individual rationality in a natural way.\footnote{As we will consider just binary opinions (without abstentions), the concept of individual rationality can be slightly simplified: requiring an opinion to be $\gamma$-consistent suffices as in the case of binary opinions without abstentions, consistency implies closedness.}
\begin{definition}
Fix a BA structure $\S = \tuple{\N,\Atoms,\gamma}$, a profile $\O \in \left(\set{\0,\1}^\I \right)^\N$ of $\gamma$-consistent binary opinions, and a delegation graph $G = \langle \N, \set{R_p}_{p \in \I}\rangle$. Consider the stream $\O^0, \O^1, \ldots, \O^n, \ldots$ of opinion profiles recursively defined as follows:
\begin{itemize}
\item Base: $\O_0 := \O$
\item Step: for all $i \in \N$, $p\in \Atoms$, 
\begin{align*}
\hspace{-0.2cm} O_i^{n+1}(p) := 
\left\{
\begin{array}{ll}
O^{n}_{R_p(i)}(p) & \mbox{if    }  \set{\gamma} \cup \set{p \in \I \mid O_{R_p(i)}(p) = \1} \cup \set{\neg p \in \I \mid O_{R_p(i)}(p) = \0} \\
& \mbox{is consistent} \\
O_i^{n}(p) & \mbox{otherwise}
\end{array}
\right.
\end{align*}
\end{itemize}
where $G_p = \tuple{\N, R_p}$.
\end{definition}
When $\gamma$ is set to $\top$, the above defines $|\I|$ independent binary processes, one for each issue $p$. Each of such processes is a Boolean extremal case of a DeGroot stochastic process \cite{Degroot_1974} where opinions are binary, and each agent can trust one and at most one other agent.
When the constraint $\gamma$ is not a tautology, the definition guarantees that at each step individual opinions remain consistent with $\gamma$. We call processes defined by the above dynamics \emph{individually rational Boolean DeGroot processes} (in short, BDPs).\footnote{Other types of dynamics are of course possible. A recent systematic investigation of opinion diffusion on logically interdependent issues is \cite{Botan16}. For a broader study of Boolean DeGroot processes in the context of models of binary opinion diffusion on networks we refer the reader to \cite{christoff17stability}.}

\subsubsection{Stabilization}
We say that the stream of opinion profiles $\O^0, \O^1, \ldots, \O^n, \ldots$ {\em stabilizes} if 
there exists $n \in \mathbb{N}$ such that for all $m\in \mathbb{N}$, if $m\geq n$, then $\O^m = \O^n$. We call such profile the {\em limit profile}. A BDP that stabilizes can be thought of as an opinion transformation function \cite{List_2010} $f_G: \O \to \O$ turning an initial binary profile $\O$ into a new binary profile $f(\O)$ equal to the limit profile. In this view, individually rational proxy aggregation consists first in an opinion transformation, implemented through a BDP, and then the application of an aggregator (e.g., $\maj$) on the profile of transformed opinions $f(\O)$. A BDP that does not converge, can similarly be thought of as mapping the initial profile to a profile involving some level of abstention, where agents connected to some delegation cycle may not end up stabilizing and are therefore considered to abstain. We conclude by establishing conditions for individually rational Boolean DeGroot processes to stabilize. 

\begin{theorem}
\label{theorem:resistantsufficient}
Fix a BA structure $\S = \tuple{\N,\Atoms,\gamma}$, a profile $\O$ of consistent (w.r.t. $\gamma$) binary opinions, and a delegation graph $G$.
Then the following holds: {\em if} for all $p\in \Atoms$, for all $C\subseteq\N$ such that $C$ is a cycle in $G_{p}$, and all $i,j\in C$: $O_i(p)=O_j(p)$, {\em then} the individually rational BDP (for $\O$, $G$ and $\gamma$) stabilizes in at most $k$ steps, where $k= \max\set{diam(G_p)|p\in \I}$.
\end{theorem}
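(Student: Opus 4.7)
The plan is to prove, issue by issue, that the component of the dynamics concerning each $p \in \I$ stabilizes within $diam(G_p)$ steps, and then to combine across issues using the uniform bound $k = \max_p diam(G_p)$.

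Fix $p$. Because $R_p$ is a functional relation, $G_p$ decomposes into weakly connected components, each consisting of a unique cycle with trees rooted at the cycle's vertices. For each $i \in \N$ let $d_p(i)$ be the distance in $G_p$ from $i$ to its cycle; note $d_p(i) \leq diam(G_p) \leq k$. By hypothesis, every cycle $C$ in $G_p$ is uniform: all $j \in C$ share an initial opinion $v_C \in \set{\0,\1}$ on $p$. I would first show by a simple induction on $n$ that every $j \in C$ satisfies $O^n_j(p) = v_C$: the base is uniformity, and in the step $R_p(j) \in C$ has $O^n_{R_p(j)}(p) = v_C$ by the IH, so whether the consistency check at $(j,p)$ succeeds at step $n+1$ (and $j$ copies $v_C$) or fails (and $j$ retains $O^n_j(p) = v_C$), either way $O^{n+1}_j(p) = v_C$.

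I would next show, by induction on $d_p(i)$, that $O^n_i(p)$ is constant for $n \geq d_p(i)$. The base case $d_p(i)=0$ is the previous step. For $d_p(i) = d > 0$, apply the IH to $R_p(i)$ (which is at distance $d-1$ from its cycle) to obtain a value $v$ with $O^n_{R_p(i)}(p) = v$ for all $n \geq d-1$. From step $d$ onward, $i$'s update on $p$ reads the now-constant value $v$ as its sole source of new information, so the dynamics on $O^n_i(p)$ reduces to a single copy/no-copy decision: either the check succeeds at step $d$ and $i$ locks in $v$ from then on, or the check fails and $i$ retains $O^{d-1}_i(p)$ from then on. Combining this across all issues and using $k \geq diam(G_p) \geq d_p(i)$ yields $\O^{k+1} = \O^k$, which is the claim.

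The main obstacle is the consistency check, which inspects the trustee's full opinion and hence couples the per-issue inductions through $\gamma$. To make the second induction rigorous I would run the arguments jointly over all $p \in \I$, using the fact that each agent's opinion on each issue stabilizes by step $k$ (the bound holds uniformly, since $d_q(j) \leq diam(G_q) \leq k$ for every $j$ and every $q$). Once the trustee's full opinion is stationary, the check outcome at $(i,p)$ is stationary too, which is precisely what is needed to conclude that the copy/no-copy decision described above is well-defined and that no further updates occur past step $k$.
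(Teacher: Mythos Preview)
Your strategy is the same as the paper's: fix the cycle values first, then propagate stability outward by induction on distance to the cycle. Your first lemma---that every $j$ lying on a $p$-cycle satisfies $O^n_j(p)=v_C$ for all $n$, regardless of whether the consistency check fires---is correct and is argued more carefully than the paper's base case, which handles this point rather tersely.

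There is, however, a genuine gap in the second induction. The per-issue claim that $O^n_i(p)$ is constant for all $n\geq d_p(i)$ is false. Take $\N=\{a,b,c,d\}$, $\I=\{p,q\}$, $\gamma = p \leftrightarrow q$; on $p$ let $b\to a$, $c\to d$ with $a,d$ self-loops; on $q$ let $b\to c\to d$ with $a,d$ self-loops; start from $O_a=O_d=(\1,\1)$, $O_b=O_c=(\0,\0)$. Then $d_p(b)=1$, but at step $1$ the synthetic trustee opinion for $b$ is $(O^0_a(p),O^0_c(q))=(\1,\0)$, which is $\gamma$-inconsistent, so $b$ keeps $(\0,\0)$; at step $2$ it is $(\1,\1)$ (because $c$ copied $d$ at step $1$), so $b$ updates. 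Hence $O^1_b(p)=\0\neq \1=O^2_b(p)$, and your dichotomy ``either the check succeeds at step $d$ and $i$ locks in $v$, or it fails and $i$ retains $O^{d-1}_i(p)$ from then on'' breaks down: the check can fail at step $d$ and pass later. Your fallback in the last paragraph---run the induction jointly and use that every trustee's opinion stabilises by step $k$---is circular as written, since stabilisation by step $k$ is exactly the conclusion. The paper avoids the per-issue bound and instead inducts on $k_i=\max_p d_p(i)$, carrying the \emph{full} opinion $O^n_i$ through the induction and splitting the step into ``trustee profile consistent'' vs.\ ``inconsistent''; to align with the paper you would need to replace $d_p(i)$ by $k_i$ throughout and argue about $O^n_i$ rather than $O^n_i(p)$. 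Even then the inductive step requires knowing that each $O^m_{R_q(i)}(q)$ is already stable, which is the point both your sketch and the paper's ``this implies that all influencers of $i$ are stable'' pass over quickly; your explicit flagging of the coupling through $\gamma$ is the right instinct, but it still needs a well-founded order to discharge it.
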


\smallskip

When $\gamma = \top$, the opposite direction also holds, and one can obtain a characterization of the notion of stabilization for BDPs based on properties of the initial opinion profile and of the delegation graph.
\begin{theorem}
\label{theorem:opinion}
Fix a BA structure $\S = \tuple{\N,\Atoms,\gamma}$, a profile $\O$ of consistent (w.r.t. $\gamma$) binary opinions, and a delegation graph $G$, and let $\gamma = \top$. Then the following statements are equivalent:
\begin{enumerate}
\item The BDP (for $\O$ and G) stabilizes.
\item For all $p\in \Atoms$, there is no set of agents $S\subseteq\N$ such that: $S$ is a cycle in $G_{p}$ and there are two agents $i,j\in S$ such that $O_i(p)\neq O_j(p)$.
\end{enumerate}
\end{theorem}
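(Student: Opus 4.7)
The plan is to observe that when $\gamma = \top$, the consistency check in the update rule is automatically satisfied (a single binary assignment on logically independent issues is always $\top$-consistent), so the dynamics degenerates to the clean recurrence $O_i^{n+1}(p) = O_{R_p(i)}^n(p)$ for every agent $i$ and issue $p$. Since the issues evolve independently, it is enough to argue per issue $p$. A one-line induction then yields the key formula $O_i^n(p) = O^0_{R_p^n(i)}(p)$, where $R_p^n$ denotes the $n$-fold composition of the endomap $r_p$.

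The direction $(2) \Rightarrow (1)$ is immediate: condition (2) is exactly the hypothesis of Theorem \ref{theorem:resistantsufficient} (the $\gamma$-consistency assumption there is trivial here), so the process stabilizes in at most $\max\set{diam(G_p) \mid p \in \I}$ steps.

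For the harder direction $(1) \Rightarrow (2)$, I would argue by contraposition. Suppose some $G_p$ contains a cycle $C$ with two agents $i, j \in C$ whose initial opinions on $p$ disagree. The endomap $r_p$ restricted to $C$ is a cyclic permutation of period $|C|$, hence for every $n$ the map $k \mapsto r_p^n(k)$ is again a permutation of $C$. By the recurrence above, at every step the multiset $\set{O_k^n(p) \mid k \in C}$ is the image of $\set{O_k^0(p) \mid k \in C}$ under a permutation, so it remains unchanged and in particular keeps containing both $\0$ and $\1$. On the other hand, if the process stabilized to some limit profile $\O^*$, taking $n$ large in the fixpoint equation $\O^{n+1} = \O^n$ forces $O^*_k(p) = O^*_{r_p(k)}(p)$ for all $k \in C$, which on the cycle means all agents of $C$ share the same opinion in $\O^*$. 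This contradicts the preserved mixed multiset on $C$.

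The main technical point—and really the only one that does work—is the observation that $r_p$ acts on each cycle as a cyclic permutation, so the multiset of opinions along the cycle is a conserved quantity of the dynamics. Once this invariant is noted, both directions combine cleanly with Theorem \ref{theorem:resistantsufficient}; no agenda-theoretic machinery or appeal to path-connectedness is required, because the assumption $\gamma = \top$ decouples the issues entirely.
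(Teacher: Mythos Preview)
Your proof is correct. The paper's argument is close in spirit but differs in two small technical respects. For $(2)\Rightarrow(1)$ the paper gives a direct argument (agents on a cycle are fixed from the start, and any agent at distance $d$ from the cycle stabilizes after $d$ steps), whereas you simply invoke Theorem~\ref{theorem:resistantsufficient}; your shortcut is legitimate and saves a few lines. For $(1)\Rightarrow(2)$ the paper tracks a single agent $i$ on the cycle and exhibits explicit non-stabilization via periodicity, showing $O_i^{xk}(p)\neq O_i^{xk+d}(p)$ for all $x$ (with $k$ the cycle length and $d$ the distance from $i$ to $j$). You instead observe that the multiset of opinions along the cycle is a conserved quantity of the dynamics and derive a contradiction with the fixpoint equation at the limit. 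Your invariant argument is perhaps a touch more conceptual, while the paper's version is more explicit about the actual oscillation pattern; both rely on the same recurrence $O_i^n(p)=O^0_{r_p^n(i)}(p)$ and neither needs anything beyond the fact that $r_p$ is a cyclic permutation on each cycle.
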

A special case of Theorem \ref{theorem:opinion} is the case in which 
$G_{p}$ contains no cycle of length $\geq 2$. In such case, a direct consequence of the theorem is that the process stabilizes from any profile. This is also a corollary of a known stabilization result for DeGroot processes (cf. \cite[p.233]{jackson08social}).

\section{Conclusions} \label{sec:conclusions}

The paper has shown how delegable proxy voting (liquid democracy) can be understood as an aggregator within the theory of binary aggregation with abstentions, for which we provided a novel characterization theorem of issue-wise majority (Theorem \ref{thm:quotarules}). This has allowed us to clarify the impact of cyclical delegations on individual and collective abstentions (Proposition \ref{fact:cycles}) and to suggest alternative aggregators requiring individuals to reveal a default opinion, which can be shown to better behave in the presence of delegation cycles (Proposition \ref{fact:cycles_default}). Finally we showed how delegable proxy interferes with individual rationality, a standard tenet of social choice theory. Also in this case we showed how liquid democracy could be adjusted---in the form of a stabilizing diffusion process---in order to preserve individual rationality (Theorem \ref{theorem:resistantsufficient}).


\appendix

\section*{Proofs}

\begin{proof}[Proof of Proposition \ref{fact:rat}]
If the agenda $\pm\I$ is simple, then all minimally inconsistent sets have cardinality $2$, that is, are of the form $\set{\phi,\neg \psi}$ such that $\phi \models \neg \psi$ for $\phi,\psi \in \I$. W.l.o.g. assume $\phi = p_i$ and $\psi = p_j$. Suppose towards a contradiction that there exists a profile $\O$ such that $\maj(\O)$ is inconsistent, that is, $\maj(\O)(p_i) = \maj(\O)(p_j) = 1$, and $\phi \models \neg \psi$. By the definition of $\maj$ \eqref{eq:majast} it follows that $|\O(p_i)| > |\O(\neg p_i)|$ and $|\O(p_j)| > |\O(\neg p_j)|$. Since $p_i \models \neg p_j$ by assumption, and since individual opinions are consistent and closed, $|\O(\neg p_j)| \geq |\O(p_i)|$ and $|\O(\neg p_i)| \geq |\O(p_j)|$. From the fact that $|\O(p_i)| > |\O(\neg p_i)|$ we can thus conclude that $|\O(\neg p_j)|> |\O(p_j)|$. Contradiction.
\end{proof}

\begin{proof}[Proof of Lemma \ref{lemma:min}]
We establish the claim through a series of equivalences. 
Observe first of all that a uniform and symmetric quota rule $F$ is such that (a) $F = \arg\min_G u(G)(p)$, for all $p \in \Atoms$ if and only if, (b) for any $\O \in \AB_c$ and $p \in \Atoms$, $u(\O)(p) = \ast$ if and only if $\O(p) = \O(\neg p)$, that is, an even number of voters vote and the group is split in half. Now, (b) is the case if and only if, (c) the quota of $F$ are set in such a way that $\lceil q_\1(p)|\O(\pm p)| \rceil = \lceil q_\0(p) |\O(\pm p)| \rceil = \left\lceil \frac{|\O(\pm p)| + 1}{2}\right\rceil$ for any profile $\O$ and issue $p$. In turn (c) is the case if and only if, (d) the quota of $F$ are set as $\frac{1}{2}< q_\1(p) = q_0(p) \leq \frac{|N|+1}{2|N|}$, which are the quota defining $\maj$ (Example \ref{example:maj}).
\end{proof}

\begin{proof}[Proof of Theorem \ref{thm:quotarules}]
\smallskip
\fbox{Claim 1}
Left-to-right: Easily checked.
Right-to-left: Let $F$ be an anonymous, independent, monotonic, and responsive aggregator. By \textit{anonymity} and \textit{independence}, for any $p\in\Atoms$, and any $\O\in\AB_c$, the only information determining the value of $F(O)(p)$ are the integers $|\O(p)|$ and $|\O(\neg p)|$. 
By \textit{responsiveness}, there exists a non-empty set of profiles $S^\1=\{\O\in\AB|F(\O)(p)=\1\}$. Pick $\O$ to be any profile in $S^\1$ with a minimal value of $\frac{|\O(p)|}{|\O(\pm p)|}$ and call this value $q_\1$. Now let $\O'$ be any profile such that $\O'=_{-i}\O$ and $\frac{|\O'(p)|}{|\O'(\pm p)|}>q_\1$. This implies that $O_i(p)=\0$ and $O'_i(p)=\1$. By \textit{monotonicity}, it follows that $F(\O')(p)= \1$. 
By iterating this argument a finite number of times we conclude that whenever $\frac{|\O(p)|}{|\O(\pm p)|} \geq q_\1$, we have that $F(\O)(p)=\1$. 
Given that $q_\1$ was defined as a minimal value, we conclude also that if $F(\O)(p)=1$, then $\frac{\O(p)}{\O(p^\pm)}\geq q_\1$. The argument for $q_\0$ is identical.

\fbox{Claims 2 \& 3} follow straightforwardly from the definitions of uniform quota rule (Definition \ref{def:quota}) and of neutrality (Definition \ref{def:properties}) and, respectively, from the definitions of symmetric quota rules (Definition \ref{def:quota}) and of unbiasedness (Definition \ref{def:properties}) .

\fbox{Claim 4}
Left-to-right. Recall that $\maj$ is defined by quota $\frac{1}{2}< q_\1 =q_\0 \leq \frac{1}{2}+\frac{1}{|N|}$ (Example \ref{example:maj}). It is clear that $\maj$ is uniform and symmetric. The claim then follows by Lemma \ref{lemma:min}. 
Right-to-left. By Lemma \ref{lemma:min} if an aggregator minimizes undecisiveness then its quota are set as $\frac{1}{2}< q_\1 =q_\0 \leq \frac{1}{2}+\frac{1}{|N|}$. These quota define $\maj$ (Example \ref{example:maj}).
\end{proof}


\begin{proof}[Proof of Proposition \ref{fact:cycles}]
The claim amounts to computing the probability that a random proxy profile $\O$ induces a delegation graph $R_p$ that does not contain gurus (or equivalently, whose endomap $r_p: \N \to \N$ has no fixpoints) as $n$ tends to infinity.
Now, for each agent $i$, the number of possible opinions on a given issue $p$ (that is, functions $O: \set{p} \to \set{\0,\1} \cup \N$) is $|(\N \backslash \set{i}) \cup \set{\0,\1}| = n + 1$ (recall $i$ cannot express ``$i$'' as an opinion). The number of opinions in which $i$ is delegating her vote is $n - 1$. So, the probability that a random opinion of $i$ about $p$ is an opinion delegating $i$'s vote is $\frac{n-1}{n+1}$. Hence the probability that a random profile consists only of delegated votes (no gurus), for a fixed issue, is $(\frac{n-1}{n+1})^n$.
The claimed value is then established through this series of equations:
\begin{align*}
\lim_{n \to \infty} \left(\frac{n-1}{n+1}\right)^n & = \lim_{n \to \infty} \left(\frac{n}{n+2}\right)^n \\
& = \lim_{n \to \infty} \left(\frac{1}{\frac{n+2}{n}}\right)^n \\
& = \lim_{n \to \infty} \left(\frac{1}{1 + \frac{2}{n}}\right)^n \\
& = \lim_{n \to \infty} \left(\frac{1}{(1 + \frac{2}{n})^n}\right) \\
& = \frac{1}{\lim_{n \to \infty}(1 + \frac{2}{n})^n}\\
& = \frac{1}{e^2}
\end{align*}
This completes the proof.
\end{proof}

\begin{proof}[Proof of Proposition \ref{fact:cycles_default}]
The claim amounts to computing the probability that a random proxy profile with default opinions $\O$ induces a delegation graph $R_p$ (equivalently, an endomap $r_p: \N \to \N$) whose cycles are all hung majorities, that is, whose cycles are all even and exactly half the agents in each cycle accept $p$. As opinion with defaults consist of both a value $x \in \set{\0, \1}$ and a trustee $i \in \N$ we can treat the probability of each component as independent: the number of all possible proxy profiles with default opinions is, therefore, $2^n \cdot n^n$. First of all, recall that a delegation graph can be represented as a set of trees whose roots are nodes in a cycle, that is, as trees whose roots are elements of a permutation of a subset of $\N$. The number of ways of arranging $n$ elements in trees rooted on $m$ elements (with $m>n\geq 1$) is given by the following recursive function (cf. \cite{purdom68cycle}): 
\begin{align}
f(n,m) & = \binom{n}{m} \sum_{0 \leq k \leq n-m} m^k f(n-m,k)
\end{align}
with $f(0,0) = 1$ and $f(n,0) = 0$ for any $n>0$.
So the number $n^n$ of all possible delegation graphs equals
\begin{align}
\sum_{1 \leq k \leq n}f(n,k) k! \label{eq:total}
\end{align}
that is, the number of ways of arranging $n$ elements in trees rooted on a permutation of a subset of $n$ (recall that $k!$ is the number of all possible permutations of $k$ elements). Now to obtain the number of ways of arranging $n$ elements in trees rooted on even cycles, each of which is a hung majority we adapt \eqref{eq:total} as follows. First we establish the number of delegation graphs (for a given issue) which contain only even cycles, that is:
\begin{align}
\sum_{k \leq n \AND \mathit{even}} f(n,k) \frac{k!}{2^k} \binom{k}{\frac{k}{2}}
\end{align}
If each addendum of the above expression is multiplied by $2^k$, that is the number of possible opinions on $p$ of $k$ agents, one obtains the number of possible proxy profiles with default that determine a delegation graph with only even cycles, with all the possible assignments of opinions $x \in \set{\0, \1}$ for the agents in the permutation on which the trees of the graph are rooted:
\begin{align}
\sum_{k \leq n \AND \mathit{even}} f(n,k) k! \binom{k}{\frac{k}{2}} \label{eq:sub}
\end{align}
We can then adapt \eqref{eq:sub} by restricting the subprofiles of opinions of the $k$ agents to hung majorities (i.e., $\binom{k}{\frac{k}{2}}$). We thus obtain the following value:
\begin{align}
\sum_{k \leq n \AND \mathit{even}} f(n,k) \frac{k!}{2^k} \binom{k}{\frac{k}{2}}^2 \label{eq:final}
\end{align}
Under the impartial culture assumption, the probability of a proxy profile with default opinions to induce only even cycles with hung majorities is therefore \eqref{eq:final} divided by $2^n \cdot n^n$.  This quantity approaches $0$ as $n$ tends to infinity.
\end{proof}



\begin{proof}[Proof of Theorem \ref{theorem:resistantsufficient}]
Assume that for all $p\in \Atoms$, for all $S\subseteq\N$ such that $S$ is a cycle in $G_{p}$, for all $i,j\in S$: $O_i(p)=O_j(p)$.
Consider an arbitrary $i\in \N$. 
Let $k_i(p)$ be the distance from $i$ to the closest agent in a cycle of $G_p$, and let $k_i$ denote $max \{k_i(p)|p\in P\}$. We show that for any $k_i\in\mathbb{N}$, $O^{k_i}_ i$ is an opinion which will not change at any later stage (stable).
\begin{itemize}
\item If $k_i=0$: $i$ is its only infuencer, therefore $O^0_{i}$ is stable by assumption. 
\item If $k_i=n+1$: Assume (IH) that for all agents $j$ such that $k_j=n$, $O^{k_j}_ j$ is stable. This implies that all influencers of $i$ are stable. There are two cases:
\begin{enumerate}
\item 
If $\set{\gamma} \cup \set{p \in \I \mid O_{R_p(i)}(p) = \1} \cup \set{\neg p \in \I \mid O_{R_p(i)}(p) = \0}$
is not consistent, then it will never be, and therefore $O^{n}_i$ is already stable. 
\item 
If $\set{\gamma} \cup \set{p \in \I \mid O_{R_p(i)}(p) = \1} \cup \set{\neg p \in \I \mid O_{R_p(i)}(p) = \0}$
is consistent, then for each $p$, $O^{n+1}_i(p) = O^{k_j}_ j(p)$, and $O^{n+1}_i$ is therefore (by IH) stable. 
\end{enumerate}
\end{itemize}
It follows that after $k$ steps, with $k= \max\set{diam(G_p)|p\in \I}$, each agent's opinion is stable, and the BDP has therefore stabilized.
 \end{proof}
 
 \begin{proof}[Proof of Theorem \ref{theorem:opinion}]
\fbox{$1) \Rightarrow 2)$} We proceed by contraposition.
Let $p\in\Atoms$, $S\subseteq\N$ be a cycle in $G_p$, $i,j\in S$, and $O_i(p)\neq O_j(p)$. Let $k$ be the length of the cycle and $d$ be the distance from $i$ to $j$. Then $O_i(p)$ will enter a loop of size $k$: for all $x\in\mathbb{N}$, $O^{x k}_i(p)\neq O^{x k+d}_i(p)$. Therefore, the BDP does not stabilize.
\fbox{$2) \Rightarrow 1)$}
Assume $S\subseteq\N$ be such that $S$ is a cycle in $G_p$, and for all $i,j\in S$, $O_i(p)=O_j(p)$. Then, for all $j\in S$, and all $x\in\mathbb{N}$, $O^{x}_j(p)=O_i(p)$ and for all $k \in\N \backslash S$ with distance $d$ from to $i$, for all $x\in\mathbb{N}$, such that $x\geq d$, $O^{x}_k(p)=O_i(p)$. Therefore, the BDP stabilizes.
\end{proof}


\bibliographystyle{eptcs}
\bibliography{biblio}

\end{document}